\documentclass[11pt]{article}

\usepackage{fullpage}
\usepackage[round]{natbib}
\usepackage{algorithmic}
\usepackage{framed}
\usepackage{microtype}
\usepackage{graphicx}
\usepackage{subfigure}
\usepackage{booktabs}
\usepackage[hyperindex,breaklinks]{hyperref}

\usepackage{amsmath,amsthm,amssymb,amsfonts}

\usepackage{mathtools}
\usepackage{dsfont}

\usepackage{tikz,pgfplots}
\pgfplotsset{compat=1.16}

\usepackage{nicefrac}

\usepackage{silence}
\WarningsOff[xcolor]



\newcommand{\E}{\mathbb{E}}


\newcommand{\pr}{\mathbb{P}}
\newcommand{\prob}[1]{\pr\left[#1\right]}
\newcommand{\abs}[1]{\left\lvert#1\right\rvert}

\newcommand{\Phase}{\mathrm{Phase}}

\def\rbr#1{\left(#1\right)}

\def\cbr#1{\left\{#1\right\}}
\DeclarePairedDelimiter\ceil{\lceil}{\rceil}
\DeclarePairedDelimiter\floor{\lfloor}{\rfloor}

\def\norm#1{\lVert#1\rVert}
\DeclarePairedDelimiterX{\inner}[2]{\langle}{\rangle}{#1, #2}

\newcommand{\real}{\mathbb{R}}
\newcommand{\ind}{\mathds{1}}

\newtheorem{theorem}{Theorem}[section]
\newtheorem{claim}[theorem]{Claim}

\newtheorem{lemma}[theorem]{Lemma}


\DeclareMathOperator{\sign}{sign}

\newcommand{\alg}{\mathrm{alg}}
\newcommand{\cost}{\mathrm{cost}}
\newcommand{\OPT}{\mathrm{OPT}}


\newcommand{\bbR}{\mathbb{R}}


\newcommand{\calF}{\mathcal{F}}

\newcommand{\calP}{\mathcal{P}}

\begin{document}

\title{\texorpdfstring{Near-optimal Algorithms for Explainable $k$-Medians and $k$-Means}{Near-optimal Algorithms for Explainable k-Medians and k-Means}\footnote{The conference version of this paper appeared in the proceedings of ICML 2021.}}
\author{
Konstantin Makarychev\footnote{Equal  contribution. The authors were supported by
NSF Awards CCF-1955351 and CCF-1934931.}
\and 
Liren Shan\footnotemark[2]}
\date{Northwsestern University}

\maketitle

\begin{abstract}
We consider the problem of explainable $k$-medians and $k$-means introduced by Dasgupta, Frost, Moshkovitz, and Rashtchian~(ICML 2020). 
In this problem, our goal is to find a \emph{threshold decision tree} that partitions data into $k$ clusters and minimizes the $k$-medians or $k$-means objective. The obtained clustering is easy to interpret because every decision  node of a threshold tree splits data  based on a single feature into two groups. We propose a new algorithm for this problem which is $\tilde O(\log k)$ competitive with $k$-medians with $\ell_1$ norm and $\tilde O(k)$ competitive with $k$-means. This is an improvement over the previous guarantees of $O(k)$ and $O(k^2)$ by Dasgupta et al (2020). We also provide a new algorithm which is $O(\log^{\nicefrac{3}{2}} k)$  competitive for $k$-medians with $\ell_2$ norm. 
Our first algorithm is near-optimal: Dasgupta et al (2020) showed a lower bound of $\Omega(\log k)$ for $k$-medians; in this work, we prove a lower bound of $\tilde\Omega(k)$ for 
$k$-means. We also provide a lower bound of $\Omega(\log k)$ for $k$-medians with $\ell_2$ norm.
\end{abstract}

\section{Introduction}
In this paper, we investigate the problem of \emph{explainable} 
$k$-means and $k$-medians clustering which was recently introduced
by~\citet*{DFMR20}. Suppose, we have a data set which we need to partition into $k$ clusters. How can we do it? 
Of course, we could use one of many standard algorithms for $k$-means or $k$-medians clustering. However, we want to find an \emph{explainable} clustering
-- clustering which can be easily understood by a human being. Then, $k$-means or $k$-medians clustering may not be the best options for us.

Note that though every cluster in a $k$-means and $k$-medians clustering has a simple mathematical description, this description is 
not necessarily easy to interpret for a human. Every $k$-medians or $k$-means  clustering is defined by a set of $k$ centers $c^1,c^2,\dots, c^k$, where
each cluster is the set of points located closer to a fixed center $c^i$ than to any other center $c^j$. That is, for points in cluster $i$, we must 
have $\arg\min_{j}\|x-c^j\| = i$. Thus, in order to determine
to which cluster a particular point belongs, we need to compute
distances from point $x$ to all centers $c^j$. Each distance depends
on all coordinates of the points. Hence, for a human, it is not 
even easy to figure out to which cluster in $k$-means or $k$-medians  clustering a particular point belongs to; let alone interpret the entire  clustering.

In every day life, we are surrounded by different types of classifications. Consider the following examples from Wikipedia: 
\emph{(1) Performance cars are capable of going from 0 to 60 mph in under 5 seconds}; \emph{(2) Modern sources currently define skyscrapers as being at least 100 metres or 150 metres in height;} \emph{(3) Very-low-calorie diets are diets of 800 kcal or less energy intake per day, whereas low-calorie diets are between 1000-1200 kcal per day}. Note that all these 
definitions depend on a \emph{single feature} which makes them easy 
to understand. 

The above discussion leads us to the idea of \citet{DFMR20}, who proposed to use threshold (decision) trees to describe clusters
(see also \citet*{liu2005clustering}, \citet*{fraiman2013interpretable}, \citet*{bertsimas2018interpretable}, and \citet*{saisubramanian2020balancing}). 

A threshold tree is a binary classification tree with $k$ leaves. Every internal node $u$ of the tree splits the data into two sets by comparing a single feature $i_u$ of each data point with a threshold $\theta_u$. The first set is the set of points with $x_{i_u}\leq \theta_u$; the second set is the set of points
with $x_{i_u} > \theta_u$. These two sets are then recursively partitioned 
by the left and right children of $u$. Thus, each point $x$ in the data set is eventually assigned to one of $k$ leaves of the threshold tree $T$. This gives us a partitioning of the data set $X$ into clusters 
$\calP = (P_1,\dots, P_k)$. We note that threshold decision trees are special cases of binary space partitioning  (BSP) trees  and similar to $k$-d trees~\cite{kdTrees}.

\citet{DFMR20} suggested that we measure the quality of a threshold tree using the standard $k$-means and $k$-medians objectives. 
Specifically, the $k$-medians in $\ell_1$ cost of the threshold tree $T$ equals (\ref{objective:L1}),
the $k$-medians in $\ell_2$ cost equals (\ref{objective:L2})
and $k$-means cost  equals (\ref{objective:L22}):
\begin{align}
\label{objective:L1}
\cost_{\ell_1}(X,T) &= \sum_{i=1}^k \sum_{x \in P_i} \norm{x-c^i}_1,\\
\label{objective:L2}
\cost_{\ell_2}(X,T) &= \sum_{i=1}^k \sum_{x \in P_i} \norm{x-c^i}_2,\\
\label{objective:L22}
\cost_{\ell_2^2}(X,T) &= \sum_{i=1}^k \sum_{x \in P_i} \norm{x-c^i}_2^2,
\end{align}
where $c^i$ is the $\ell_1$-median of cluster $P_i$ in (\ref{objective:L1}), the $\ell_2$-median of cluster $P_i$ in (\ref{objective:L2}), and the mean of cluster $P_i$ in (\ref{objective:L22}).

This definition raises obvious questions: Can we actually find a good explainable clustering? Moreover, how good can it be comparing to a regular $k$-medians and $k$-means clustering? Let $\OPT_{\ell_1}(X)$, $\OPT_{\ell_2}(X)$, and $\OPT_{\ell_2^2}(X)$ be the optimal solutions to (regular) 
$k$-medians with $\ell_1$ norm, $k$-medians with $\ell_2$ norm, and $k$-means, respectively. 
\citet{DFMR20} defined 
the \emph{price of explainability} as the ratio
$\cost_{\ell_1}(X,T)/\OPT_{\ell_1}(X)$ for $k$-medians in $\ell_1$ and $\cost_{\ell_2^2}(X,T)/\OPT_{\ell_2^2}(X)$ for
$k$-means. The price of explainability shows by how much the optimal 
unconstrained solution is better than the best explainable solution 
for the same data set. 

In their paper, \citet{DFMR20} gave upper and lower bounds on the price of explainability. They proved that the price of explainability
is upper bounded by $O(k)$ and $O(k^2)$ for $k$-medians in $\ell_1$ and $k$-means, respectively. Furthermore, they designed two algorithms that given a
$k$-medians in $\ell_1$ or $k$-means clustering, produce an explainable clustering 
with cost at most $O(k)$ and $O(k^2)$ times the cost of original clustering (respectively). They also provided examples for which the price of explainability of $k$-medians in $\ell_1$ and $k$-means is at least $\Theta(\log k)$.

\subsection{Our results}
In this work, we give almost tight bounds on the price of 
explainability for both $k$-medians in $\ell_1$ and $k$-means. Specifically,
we show how to transform any clustering to an explainable clustering with cost at most $O(\log k \log\log k)$ times the original cost for the $k$-medians $\ell_1$ objective and $O(k \log k \log\log k)$ for the $k$-means objective.
Note that we get an exponential improvement over previous results for the $k$-medians $\ell_1$
objective. Furthermore, we present an algorithm for $k$-medians in $\ell_2$ with the price of explainability bounded by $O(\log^{\nicefrac{3}{2}} k)$. We complement these results with an almost tight lower
bound of $\Omega(k/\log k)$ for the $k$-means objective and an $\Omega(\log k)$ lower bound for $k$-medians in $\ell_2$ objective. 
We summarise our results in Table~\ref{table-our-results}.

\begin{figure*}
\renewcommand{\arraystretch}{1.2}
\begin{center}
\scalebox{0.85}{
\begin{tabular}{|c |c | c | c | c |c | c | c |}
\hline
 & \multicolumn{2}{|c|}{\textbf{$k$-medians in $\ell_1$}}
 & \multicolumn{2}{|c|}{\textbf{$k$-medians in $\ell_2$}} 
 & \multicolumn{2}{|c|}{\textbf{$k$-means}}\\
\hline\hline   
 & \textbf{Lower} & \textbf{Upper} 
 & \textbf{Lower} & \textbf{Upper}
 & \textbf{Lower} & \textbf{Upper}\\ 
\hline\hline   
\textbf{Our results}    
& & $O(\log k \log\log k)$ & $\Omega(\log k)$ & $O(\log^{\nicefrac{3}{2}} k)$ & $\Omega(k/\log k)$ & $O(k\log k\log\log k)$\\ 
\hline
\textbf{\citet{DFMR20}} 
& $\Omega(\log k)$ & $O(k)$ &  &  & $\Omega(\log k)$ & $O(k^2)$ \\
\hline
\end{tabular}}
\caption{Summary of our results. The table shows known upper and lower bounds on the \textit{price of explainability} for $k$-medians in $\ell_1$ and $\ell_2$, and for $k$-means.}
\label{table-our-results}
\end{center}
\end{figure*}

Below, we formally state our main results.
The costs of threshold trees and clusterings are defined by formulas (\ref{objective:L1}), (\ref{objective:L2}), (\ref{objective:L22}), 
(\ref{def:cost1}), (\ref{def:cost2}), and (\ref{def:cost2^2}).
\begin{theorem}\label{thm:kmedians-improved}
There exists a polynomial-time randomized algorithm that given a data set $X$ and a set of centers $C = \{c^1,\dots,c^k\}$, finds a threshold tree $T$ with expected $k$-medians in $\ell_1$ cost at most
$$
\E[\cost_{\ell_1}(X,T)] \leq O(\log k\log\log k)\cdot \cost_{\ell_1}(X,C).
$$
\end{theorem}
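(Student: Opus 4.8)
The plan is to build $T$ by a randomized recursive ``threshold cut'' procedure and to charge the cost increase of each data point against its reference cost. Fix any threshold tree whose $k$ leaves each hold exactly one center of $C$, let $c(x)$ denote the center sharing a leaf with a data point $x$, and let $c^i$ be the reference center of $x$. Then $\cost_{\ell_1}(X,T)=\sum_x \norm{x-c(x)}_1 \le \sum_x\norm{x-c^i}_1+\sum_x\norm{c^i-c(x)}_1 = \cost_{\ell_1}(X,C)+\sum_x\norm{c^i-c(x)}_1$, and in the last sum only the points that were actually separated from their reference center contribute (for every other $x$ we have $c(x)=c^i$). So it suffices to design the random tree so that $\E\!\big[\norm{c^i-c(x)}_1\,\ind[x\text{ separated from }c^i]\big]=O(\log k\log\log k)\,\norm{x-c^i}_1$ for every $x$; linearity of expectation then gives the theorem.

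For the construction I would use the natural scheme: at a node holding a set $C'$ of still-grouped centers together with the data routed there, sample an axis $j$ and a threshold $\theta$ from a suitable distribution, split $C'$ and the data into $\{c_j\le\theta\}$ and $\{c_j>\theta\}$, and recurse until every leaf holds exactly one center. A cut $(j,\theta)$ separates the pair $(x,c^i)$ precisely when $\theta$ lies strictly between $x_j$ and $c^i_j$, so the total length of ``bad'' thresholds summed over all axes equals $\norm{x-c^i}_1$ exactly. The whole game is the choice of the sampling distribution: taking $\theta$ uniform over the entire coordinate spread of $C'$ is the choice of \citet{DFMR20} and loses a factor $\Theta(k)$, because a separated point can be reassigned to a center as far as $\operatorname{diam}_{\ell_1}(C')$ away. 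Instead I would bias the sampling toward a random dyadic \emph{scale} $2^{-t}$ and make the cut at that granularity, so that a point separated by a scale-$s$ cut is provably reassigned to a center whose distance to it is controlled by the local spread at that scale rather than by the global diameter.

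The analysis then expands, for each $x$, the expected penalty as a sum over coordinates $j$ and scales $s$ of $\Pr[\,x\text{ is first separated from }c^i\text{ by a scale-}s\text{ cut along axis }j\,]$ times the resulting penalty, which is at most the $\ell_1$-diameter of the live centers at the node $v$ where the separation happens (both $c^i$ and $c(x)$ reach $v$, so $\norm{c^i-c(x)}_1\le\operatorname{diam}_{\ell_1}(C'_v)$). The separation probability along axis $j$ is governed by $|x_j-c^i_j|$ divided by the current total spread; chaining this through the recursion and summing the resulting harmonic-type series $\sum_r 1/r$ over the centers ordered by distance to $x$ yields the leading $O(\log k)$ factor, while the extra $O(\log\log k)$ is lower-order overhead from making the random-scale selection work uniformly over the relevant range of magnitudes (e.g. a union bound over $O(\log\log k)$ failure modes, after a harmless rescaling that makes the aspect ratio $\mathrm{poly}(k)$). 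The remaining ingredients are elementary: the triangle inequality already used above, linearity of expectation, and summing a geometric series over scales.

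The main obstacle — and the reason the bound is nontrivial — is the tension between the two demands placed on the cut distribution. To keep each point's probability of being separated small, one wants few, coarse cuts; but to keep the \emph{penalty} small after separation, so that the point lands with a nearby center instead of paying the $\Theta(k)$-size diameter of $C'$, one wants many, fine cuts. Producing a single scale-structured randomized cut rule that controls, simultaneously and aggregated over the entire tree, the product of ``separation probability'' and ``penalty'' at the level $O(\log k\log\log k)$ is the crux; everything else is bookkeeping.
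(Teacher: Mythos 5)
Your outer layer matches the paper's: the same decomposition $\cost_{\ell_1}(X,T)\le\cost_{\ell_1}(X,C)+\sum_x\|c^i-c(x)\|_1$, and the same key observation that the total measure of thresholds $(j,\theta)$ separating $x$ from $c^i$, summed over axes, equals $\|x-c^i\|_1$ exactly. But the proposal stops exactly where the proof has to start, and you say so yourself (``the crux; everything else is bookkeeping''). The cut distribution is never specified: ``bias the sampling toward a random dyadic scale'' is not an algorithm, and the paper's actual rule is different --- at each step it samples uniformly (w.r.t.\ Lebesgue measure on $\{1,\dots,d\}\times\bbR$) from the set of all cuts separating some still-unseparated pair of centers, minus the cuts that only separate pairs at distance at most $D_t/k^3$, where $D_t$ is the current maximum leaf diameter. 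That truncation, not a ``harmless rescaling to aspect ratio $\poly(k)$,'' is what bounds the number of diameter scales a fixed threshold can remain live for; no such rescaling is available for $k$-medians with unbounded aspect ratio, and it is not harmless. Likewise, your claim that a point separated by a scale-$s$ cut is ``provably reassigned to a center whose distance is controlled by the local spread at that scale'' is asserted, not proved, and it is the hard part: the only generic bound on the reassignment distance is the diameter of the leaf, which yields $O(\log^2 k)$; to do better the paper must classify cuts as light/medium/heavy via $D_t^{\min}(x,\omega)$ and $D_t^{\max}(x,\omega)$ and recursively invoke the weaker $O(\log^2 k)$ bound on the post-separation cost.

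The aggregation step is also missing. You cannot sum a harmonic series over centers ordered by distance to $x$: the per-step separation probability is normalized by $\mu(R_t)$, a random, time-varying quantity, so the contributions across steps are neither independent nor individually controlled. The paper handles this with a supermartingale/stopping-time claim showing that for an adapted sequence with $\Pr[e\notin E_{t+1}\mid\calF_t]\ge p_t$ for all surviving $e$, one has $\E[\sum_{t<\tau}p_t]\le\ln|E|+O(1)$; applied per diameter-phase this gives the $O(\log k)$ factor, and the $O(\log\log k)$ comes from bounding the number of phases a heavy cut can survive (because $D_{t^{**}}\ge D_{t^*}/(2\log^4k)$), not from a union bound over $O(\log\log k)$ failure modes. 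As written, the proposal identifies the correct target inequality but supplies neither the algorithm nor either of the two lemmas that carry the proof.
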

\begin{theorem}\label{thm:k-means}
There exists a polynomial-time randomized algorithm that given a data set $X$ and a set of centers $C = \{c^1,\dots,c^k\}$, finds a threshold
tree $T$ with expected $k$-means cost at most
$$
\E[\cost_{\ell_2^2}(X,T)] \leq O(k \log k \log\log k)\cdot \cost_{\ell_2^2}(X,C).
$$
\end{theorem}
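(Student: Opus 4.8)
The plan is to run the same recursive threshold‑tree algorithm as in Theorem~\ref{thm:kmedians-improved} and to show that, for the $k$-means objective, essentially the same analysis loses only one additional factor of $k$ over the $\ell_1$ guarantee. For notation: for each $x\in X$ let $c(x)\in C$ be a nearest center and $r_x=\norm{x-c(x)}_2$, so that $\cost_{\ell_2^2}(X,C)=\sum_{x\in X} r_x^2$. In the tree $T$ output by the algorithm, every leaf $\ell$ contains exactly one center $c_\ell\in C$; since the mean of a cluster minimizes the sum of squared Euclidean distances to the points of that cluster, $\cost_{\ell_2^2}(X,T)\le\sum_{\ell}\sum_{x\in \ell}\norm{x-c_\ell}_2^2$, so it is enough to upper bound $\E\big[\sum_{x\in X}\norm{x-c_{\ell(x)}}_2^2\big]$, where $\ell(x)$ denotes the leaf reached by $x$.

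Following the analysis of Theorem~\ref{thm:kmedians-improved}, I would maintain a \emph{reference center} for each point $x$ as it descends $T$: it starts at $c_0=c(x)$, and whenever a cut separates $x$ from its current reference center $c_{j-1}$, we reassign $x$ to some center $c_j$ lying on $x$'s side of that cut (such a center always exists, since every cell of $T$ contains at least one center of $C$). This produces a chain $c_0=c(x),c_1,\dots,c_t=c_{\ell(x)}$. The key structural fact is that each reassignment permanently removes $c_{j-1}$ from $x$'s cell, so the chain has length $t\le k-1$. Writing $x-c_{\ell(x)}=(x-c_0)+\sum_{j=1}^{t}(c_{j-1}-c_j)$ and applying Cauchy--Schwarz to these $t+1\le k$ vectors gives
\[
\norm{x-c_{\ell(x)}}_2^2 \;\le\; k\Big(r_x^2+\sum_{j=1}^{t}\norm{c_{j-1}-c_j}_2^2\Big).
\]
Summing over $x\in X$ and taking expectations, Theorem~\ref{thm:k-means} reduces to an $\ell_2^2$ analogue of the \emph{total displacement} bound that underlies Theorem~\ref{thm:kmedians-improved}:
\[
\E\Big[\sum_{x\in X}\sum_{j\ge 1}\norm{c_{j-1}-c_j}_2^2\Big]\;\le\;O(\log k\log\log k)\cdot \cost_{\ell_2^2}(X,C).
\]

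To prove this displacement bound I would reuse the machinery behind Theorem~\ref{thm:kmedians-improved}: charge each separation event to the cut responsible for it, split the charge coordinate by coordinate, and use that the cut‑selection distribution makes the probability that $x$ is separated from its current reference center at a node proportional, up to the $O(\log k\log\log k)$ overhead, to the fraction of that cell's scale spanned by the segment between $x$ and its reference center, while guaranteeing that the replacement center $c_j$ is, in expectation, not much farther from $x$ than $c_{j-1}$ was. I expect the main obstacle to be exactly this squared‑distance accounting: for the $\ell_1$ objective one only needs to control $\sum_j\norm{c_{j-1}-c_j}_1$, which telescopes additively over coordinates, whereas here one must control a sum of \emph{squared} Euclidean displacements; so one must rule out that a reassignment ever increases (in expectation) a point's squared distance to its reference center, and verify that the chain‑length bound $t\le k-1$ is the sole place where a factor $k$ is paid, so that combining it with the displacement bound yields precisely $O(k\log k\log\log k)$. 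A minor additional check is that any modification the $\ell_2$ geometry forces in the cut distribution does not weaken the guarantee of Theorem~\ref{thm:kmedians-improved} itself.
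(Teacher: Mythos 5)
Your reduction steps are sound as far as they go: the observation that the mean minimizes the sum of squared distances, the chain of at most $k-1$ reference centers, and the Cauchy--Schwarz step $\norm{x-c_{\ell(x)}}_2^2\le k\big(r_x^2+\sum_j\norm{c_{j-1}-c_j}_2^2\big)$ are all correct, and the factor $k$ you pay there is morally the ``right'' one (the paper pays it the same way, via a Cauchy--Schwarz over a chain of at most $k$ terminals hidden inside its embedding). But the displacement bound you reduce to,
$$
\E\Big[\sum_{j\ge1}\norm{c_{j-1}-c_j}_2^2\Big]\le O(\log k\log\log k)\cdot r_x^2,
$$
is false for the unmodified Algorithm~\ref{alg:threshold_tree_kmedians}, and the issue you defer as ``a minor additional check'' --- adapting the cut distribution to the $\ell_2^2$ geometry --- is in fact the entire content of the paper's proof. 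Concretely: with the $\ell_1$ cut distribution, the probability that a step separates $x$ from its current center $c$ scales like $\norm{x-c}_1/\mu(R_t)$, i.e.\ \emph{linearly} in the distance, while the squared displacement incurred on separation scales like the \emph{square} of the cell's scale. Take $d=1$, centers at $0,1,\dots,k-1$, and a point $x=\varepsilon$ next to center $0$: the first cut lands in $(0,\varepsilon)$ with probability about $\varepsilon/k$, and when it does the point's squared cost jumps to about $1$, so the expected displacement is $\Omega(\varepsilon/k)$ against an optimal cost of $\varepsilon^2$ --- a ratio of order $1/(k\varepsilon)$, unbounded as $\varepsilon\to0$. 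No chain-length accounting repairs this mismatch between a linear separation probability and a quadratic penalty.

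The missing idea is the paper's coordinate-cut-preserving \emph{terminal embedding} of $\ell_2^2$ into $\ell_1$ (Lemma~\ref{lem:embedding}): each coordinate is reparametrized by a monotone map that is locally quadratic around the centers' coordinates, so that $\norm{\varphi(x)-\varphi(c)}_1\le\norm{x-c}_2^2\le 8k\,\norm{\varphi(x)-\varphi(c)}_1$ for every center $c$. Running Algorithm~\ref{alg:threshold_tree_kmedians} in the embedded space makes the separation probability of a point quadratically small in its distance to its center, matching the quadratic penalty; monotonicity means threshold cuts pull back to threshold cuts; and Theorem~\ref{thm:improved} applied in the embedded space, combined with the $8k$ distortion, gives $O(k\log k\log\log k)$ directly, with no need for the chain decomposition at all. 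To salvage your direct argument you would have to replace the uniform choice of $\theta$ within each coordinate by a density proportional to the distance to the nearest center's coordinate --- which is exactly the change of variables the embedding performs.
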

We note that the algorithms by \citet{DFMR20} also produce trees based on the given set of ``reference'' centers $c^1,\dots,c^k$. However, the approximation guarantees of those algorithms are $O(k)$ and $O(k^2)$, respectively.
Our upper bound of $O(\log k\log \log k)$
almost matches the lower bound of $\Omega(\log k)$
given by~\citet{DFMR20}. The upper bound of $O(k\log k \log\log k)$
almost matches the lower bound of $\Omega(k/\log k)$ we show in Section~\ref{sec:lb-kmeans}.

\begin{theorem}\label{thm:k-medians-L2}
There exists a polynomial-time randomized algorithm that given a data set $X$ and a set of centers $C = \{c^1,\dots,c^k\}$, finds a threshold tree $T$ with expected $k$-medians in $\ell_2$ cost at most
$$
\E[\cost_{\ell_2}(X,T)] \leq O(\log^{\nicefrac{3}{2}} k)\cdot \cost_{\ell_2}(X,C).
$$
\end{theorem}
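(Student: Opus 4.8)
The plan is to adapt the algorithm and the charging argument behind Theorem~\ref{thm:kmedians-improved} from $\ell_1$ to $\ell_2$; the only substantive changes are that the random axis-parallel cut is chosen with an $\ell_2$-weighting of the coordinates, and that the step where the analysis passes from the separation probability (naturally an $\ell_1$ quantity) to the $\ell_2$ objective costs an extra $\sqrt{\log k}$ factor. Concretely, we build the threshold tree top down; each node $u$ carries the set $C_u\subseteq C$ of still-alive centers (with $C_{\mathrm{root}}=C$), a node with $|C_u|=1$ is a leaf labelled by its center (so the tree has $k$ leaves), and at a node with $|C_u|\ge 2$ we let $w^u_i$ be the width of the bounding box of $C_u$ in coordinate $i$ and cut by choosing coordinate $i$ with probability proportional to $(w^u_i)^2$ and a uniformly random threshold in the coordinate-$i$ range of $C_u$ --- the $\ell_2$ analogue of the $\ell_1$ rule that weights coordinates by $w^u_i$. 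For the analysis we organize the cuts on any root-to-leaf path into $O(\log k)$ phases indexed by geometrically decreasing diameter scales, exactly as in the proof of Theorem~\ref{thm:kmedians-improved}. Each data point is routed down and assigned the center of its leaf.

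Next I would bound the cost of a single point. Fix $x$ with $c=c(x)$ and put $r=\norm{x-c}_2$. If $x$ is never separated from $c$, it lands in $c$'s leaf and pays $r$; otherwise let $u^*$ be the first node on $x$'s path whose cut separates $x$ from $c$. Since both $c$ and the center labelling $x$'s leaf lie in $C_{u^*}$, the point $x$ pays at most $r+D_{u^*}$, where $D_{u^*}$ is the $\ell_2$-diameter of $C_{u^*}$, so it suffices to bound $\E\!\left[D_{u^*}\cdot\ind[x\text{ separated from }c]\right]$ by $O(\log^{\nicefrac{3}{2}}k)\cdot r$. The cut distribution is chosen so that two inequalities hold at every node $u$: first, since the diameter-realizing pair of $C_u$ differs by at most $w^u_i$ in each coordinate $i$, we get $D_u\le\norm{w^u}_2$; and second, by a union bound over coordinates followed by Cauchy--Schwarz, the probability that the cut at $u$ separates $x$ from $c$ is at most
$$
\sum_i\frac{(w^u_i)^2}{\norm{w^u}_2^2}\cdot\frac{\abs{x_i-c_i}}{w^u_i}
=\frac{\langle w^u,\,\abs{x-c}\rangle}{\norm{w^u}_2^2}
\le\frac{\norm{x-c}_2}{\norm{w^u}_2}\le\frac{r}{D_u}.
$$

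It then remains to sum over phases. Using that inside a phase of diameter scale $\delta$ one has $D_{u^*}=O(\delta)$, while the probability of being separated from $c$ during that phase is $O(\norm{x-c}_1/\delta)$ (a union bound over coordinates, at the scale of the phase), the expected contribution of a phase of scale $\delta$ is $O\!\left(\min\{1,\norm{x-c}_1/\delta\}\cdot\delta\right)$. The phases with $\delta<\norm{x-c}_1$ sum to $O(\norm{x-c}_1)$ as a geometric series, and by the phase-counting argument of Theorem~\ref{thm:kmedians-improved} only $O(\log k)$ phases with $\delta\ge\norm{x-c}_1$ matter --- this is where $\log k$, rather than a quantity depending on the aspect ratio of $C$, enters --- so those contribute $O(\log k)\cdot\norm{x-c}_1$. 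Passing from $\norm{x-c}_1$ back to $\norm{x-c}_2=r$ loses at most $\sqrt{m}$, where $m$ is the number of coordinates that enter this comparison for $x$; arranging that $m=O(\log k)$ gives $\E[D_{u^*}\cdot\ind[\cdots]]\le O(\log^{\nicefrac{3}{2}}k)\cdot r$, and summing over $x\in X$ proves the theorem.

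The main obstacle is exactly this $\ell_1$-to-$\ell_2$ reconciliation: every natural estimate for how often an axis-parallel cut separates two points is an $\ell_1$ estimate, and $\ell_1$ distances can exceed $\ell_2$ distances by $\sqrt{\dim}$, so the construction and its analysis must be set up so that only $O(\log k)$ coordinates ever influence the cost of a fixed point, shrinking this gap to $\sqrt{\log k}$. A secondary difficulty, inherited from Theorem~\ref{thm:kmedians-improved}, is the phase-counting bound that keeps the number of relevant scales at $O(\log k)$ independently of the aspect ratio of $C$, which forces a balance between making enough cuts to isolate every center and making few cuts that can damage a given point.
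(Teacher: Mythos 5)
Your proposal has a genuine gap at exactly the point you flag as ``the main obstacle,'' and the gap is not closed. After accumulating the contribution of the phases with $\delta\ge\norm{x-c}_1$ you arrive at a bound of $O(\log k)\cdot\norm{x-c}_1$, and you then assert that converting $\norm{x-c}_1$ back to $\norm{x-c}_2$ loses only $\sqrt{m}$ with $m=O(\log k)$ ``by arranging'' that only $O(\log k)$ coordinates enter the comparison. Nothing in your construction arranges this: the data live in $\bbR^d$ for arbitrary $d$, every coordinate of $x-c$ contributes to the separation probability at every node, and $\norm{x-c}_1$ can genuinely be $\sqrt{d}\,\norm{x-c}_2$. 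As written, your argument yields $O(\log k)\cdot\sqrt{d}\cdot\norm{x-c}_2$, a dimension-dependent bound. A secondary, also unresolved, issue is the claim that the probability of being separated during a phase of scale $\delta$ is $O(\norm{x-c}_1/\delta)$: summing the per-cut separation probabilities over a phase requires controlling the (random) number of cuts in the phase, which for your cut distribution needs either an expected-phase-length bound or a supermartingale argument like Claim~\ref{claim:sum-probs}; neither is supplied.

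The paper's proof (Theorem~\ref{thm:k-medians-l2-first}) avoids the $\ell_1$-to-$\ell_2$ reconciliation entirely by a different algorithm design, and this is the idea your proposal is missing. Algorithm~\ref{alg:threshold_tree_kmedians-L_2} computes the $\ell_1$-median $m^u$ of the centers in a node, and cuts at $(i,\,m^u_i+\sigma\sqrt{\theta})$ with $i$ uniform, $\sigma=\pm1$ uniform, and $\theta$ uniform in $[0,(R^u_t)^2]$; the substitution $\theta\mapsto\sqrt{\theta}$ makes the threshold density at distance $s$ from the median proportional to $s$, so the probability of cutting between $x_i$ and $c_i$ is proportional to $\abs{(x_i-m^u_i)^2-(c_i-m^u_i)^2}=\abs{x_i-c_i}\cdot(\abs{x_i-m^u_i}+\abs{c_i-m^u_i})$, and Cauchy--Schwarz over coordinates turns the total separation probability directly into the $\ell_2$ quantity $\norm{x-c}_2(\norm{x-m^u}_2+\norm{c-m^u}_2)/(d(R^u_t)^2)$ --- no $\ell_1$ norm ever appears. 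The $O(\log^{3/2}k)$ then comes from an $O(\sqrt{\log k})$ penalty per call of \textsc{Partition\_Leaf} (via a light/heavy cut split at radius scale $\sqrt{\log k}\cdot\max\{\norm{x-m^u}_2,\norm{c-m^u}_2\}$ and an expected phase length of $O(d\log k)$ steps), multiplied by the $O(\log k)$ recursion depth guaranteed by cutting around the median until the main part loses half its centers. Your tree has no such depth guarantee (a root-to-leaf path can have length $k-1$), so even the $O(\log k)$ count of relevant scales would need a separate argument. To repair your proof you would essentially have to import both the median-centered quadratic cut distribution and the halving recursion, at which point you have the paper's algorithm.
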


\subsection{Related work}
\citet{DFMR20} introduced the \emph{explainable} $k$-medians and $k$-means clustering problems and developed Iterative Mistake Minimization (IMM) algorithms
for these problems. Later, \citet*{frost2020exkmc} proposed algorithms that construct threshold trees with more than $k$ leaves.

Decision trees have been used for interpretable classification and clustering since 1980s. \citet*{breiman1984classification} proposed a popular decision tree 
algorithm called CART for supervised classification. For unsupervised clustering, threshold decision trees are used in many empirical methods based on different criteria such as information gain \cite{liu2005clustering}, local $1$-means cost \cite{fraiman2013interpretable},
Silhouette Metric \cite{bertsimas2018interpretable}, and interpretability score \cite{saisubramanian2020balancing}. 

The $k$-means and $k$-medians clustering problems have been extensively studied in the literature. The $k$-means++ algorithm proposed by \citet*{arthur2006k} is the most widely used algorithm for $k$-means clustering. It provides an $O(\ln k)$ approximation. \citet*{LiSvensson16} provided a $1+\sqrt{3} + \varepsilon$ approximation for $k$-medians in general metric spaces,
which was improved to $2.611 +\varepsilon$ by \citet*{bprst14}. \citet*{ahmadian2019better} gave a $6.357$ approximation algorithm for $k$-means. The $k$-medians and $k$-means problems are NP-hard \cite{MS1984,dasgupta2008hardness,aloise2009np}. 
Recently, \citet*{awasthi2015hardness} showed that it is also NP-hard to approximate the $k$-means objective within a factor of $(1+\varepsilon)$ for some positive constant $\varepsilon$ (see also \citet{LSW17}). \citet*{bhattacharya2020hardness} showed that the Euclidean $k$-medians can not be approximated within a factor of $(1+\varepsilon)$ for some constant $\varepsilon$ assuming the unique games conjecture.

\citet{bmd09}, \citet{BZMD14}, \citet{CEMMP15}, \citet{MMR19} and \citet{BBCGS19}
showed how to reduce the dimensionality of a data set for $k$-means clustering. Particularly, \citet{MMR19} proved that we can use the Johnson--Lindenstrauss transform to reduce the dimensionality of $k$-medians in $\ell_2$ and $k$-means to $d'= O(\log k)$. Note, however, that the Johnson--Lindenstrauss transform cannot be used for the explainable $k$-medians and $k$-means problems, because this transform does not preserve the set of features. Instead, we can use a \emph{feature selection} algorithm by~\citet{BZMD14} or \citet{CEMMP15} to reduce the dimensionality to 
$d' = \tilde{O}(k)$.

Independently of our work, \citet*{laber2021price} proposed new algorithms for explainable $k$-medians with $\ell_1$ and $k$-means objectives. Their competitive ratios are $O(d\log k)$ and $O(dk\log k)$, respectively. Note that these competitive ratios depend on the dimension $d$ of the space.

\bigskip

\noindent\textbf{Remark: } After this paper was accepted to ICML 2021, we learned about three independent results that were recently posted on arXiv.org. The first paper by~\citet{CharikarHu21} gives a
$k^{1-2/d}\operatorname{poly}(d\log k)$-competitive algorithm for $k$-means. Note that this bound depends on the dimension of the data set. It is better than our $k$-means bound ($O(k \log k \log\log k)$) for small $d$ ($d \ll \log k /\log\log k$) and worse for large $d$ ($d \gg \log k /\log\log k$). The second paper by~\citet*{GJPS21} gives $O(\log^2 k)$ and $O(k \log^2 k)$-competitive algorithms for $k$-medians in $\ell_1$ and $k$-means, respectively. These bounds are slightly worse than ours. The third paper by~\citet*{EMN21} gives the same algorithm as ours for $k$-medians in $\ell_1$. They show that this algorithm is both $O(\log k \log\log k)$-competitive and $O(d\log^2 d)$-competitive. Note that the second bound is better than our bound in low-dimensional spaces, when $d < \log k$. They also provide an $O(k \log k)$-competitive algorithm and an $\Omega(k)$ lower bound for $k$-means, which are slightly better than ours. 

\section{Preliminaries}

Given a set of points $X \subseteq{\bbR^d}$ and an integer $k>1$, the regular $k$-medians and $k$-means clustering problems are to find a set $C$ of $k$ centers to minimize the corresponding costs: $k$-medians with $\ell_1$ objective cost~(\ref{def:cost1}), $k$-medians with $\ell_2$ objective cost~(\ref{def:cost2}), and $k$-means cost~(\ref{def:cost2^2}).
\begin{align}
\label{def:cost1}
\cost_{\ell_1}(X,C) = \sum_{x\in X} \min_{c \in C}\norm{x_i-c}_1,
\\
\label{def:cost2}
\cost_{\ell_2}(X,C) = \sum_{x\in X} \min_{c \in C}\norm{x_i-c}_2.
\end{align}
\begin{align}
\label{def:cost2^2}
\mathrm{cost}_{\ell_2^2}(X,C) = \sum_{x\in X} \min_{c \in C}\norm{x_i-c}_2^2.
\end{align}

Every coordinate cut is specified by the coordinate $i\in\{1,\dots,d\}$ and threshold $\theta$. We denote the set of all possible cuts by $\Omega$:
$$\Omega = \{1,\cdots, d\} \times \bbR.$$
We define the standard product measure on $\Omega$ as follows:
The measure of set $S\subset \Omega$ equals
$$\mu(S)=\sum_{i=1}^d \mu_{R}(\{\theta: (i,\theta)\in S\}),$$
where $\mu_R$ is the Lebesgue measure on $\bbR$.

For every cut $\omega = (i,\theta)\in \Omega$ and point $x\in \bbR^d$, we let 
$$\delta_x(\omega) \equiv  \delta_x(i, \theta) = 
\begin{cases}
1,&\text{if } x_i > \theta;\\
0,&\text{otherwise.}
\end{cases}
$$
In other words, $\delta_x(i, \theta)$ is the indicator of the event $\{x_i > \theta\}$.
Observe that $x\mapsto \delta_x$ is an isometric embedding of $\ell^d_1$ ($d$-dimensional $\ell_1$ space)
into $L_1(\Omega)$ (the space of  integrable functions on $\Omega$). Specifically, for $x,y\in \bbR^d$, we have
\begin{align}\label{eq:isometry}
\notag
\|x-y\|_1 &\equiv \sum_{i=1}^d |x_i - y_i| \\
&= \sum_{i=1}^d \int\displaylimits_{-\infty}^{\infty} |\delta_x(i,\theta) - \delta_y(i,\theta)|\; d\theta\\ 
\notag
&=\int_{\Omega} |\delta_x(\omega) - \delta_y(\omega)| \; d\mu(\omega)\equiv \|\delta_x - \delta_y\|_1.
\end{align}

A map $\varphi:\bbR^d \to \bbR^d$ is coordinate cut preserving if for every coordinate cut $(i,\theta) \in \Omega$, there exists a coordinate cut $(i',\theta')\in \Omega$ such that $\{x\in \bbR^d: x_{i'} \leq \theta'\} = \{x\in \bbR^d: \varphi(x)_i \leq \theta\}$ and vice versa. In the algorithm for explainable $k$-means, we use a cut preserving terminal embeddings of  ``$\ell_2^2$ distance'' into $\ell_1$. 

\begin{figure}[tb]
\begin{framed}
\begin{algorithmic}
\STATE {\bfseries Input:} a data set $X \subset \bbR^d$ and set of centers 
$C=\{c^1,c^2,\dots, c^k\} \subset \bbR^d$
\STATE {\bfseries Output:} a threshold tree $T$
\STATE {}
\STATE Set $S_{ij} = \{\omega\in \Omega: \delta_{c^i}(\omega) \neq \delta_{c^j}(\omega)\}$ for all $i,j \in \{1,\cdots,k\}$. Let $t=0$. 
\STATE Create a tree $T_0$ containing a root vertex $r$. Assign set $X_r = X\cup C$ to the root. 
\STATE {}
\WHILE{$T_t$ contains a leaf with at least two distinct centers $c^i$ and $c^j$}
\STATE Let $E_{t} = \bigcup_{\text{leaves } u}\{(i,j): c^i,c^j\in X_u\}$ be the set of all not yet separated pairs of centers.
\STATE Let $D_t = \max_{(i,j)\in E_t} \|c^i - c^j\|_1$ be the maximum distance between two not separated centers.
\STATE {}
\STATE Define two sets $A_t,B_t\subset \Omega$ as follows:
$$A_t = \bigcup_{(i,j)\in E_t} S_{ij}
\text{\;\;\;\;\;\;\;\;\;\; and\;\;\;\;\;\;\;\;\;\;}
B_t = \bigcup_{\substack{(i,j)\in E_t \\ \text{s.t.} \mu(S_{ij}) \leq D_t/k^3}} S_{ij}.
$$
\STATE Let\footnotemark $R_t = A_t \setminus B_t$. Pick a pair $\omega_t = (i,\theta)$ uniformly at random from $R_t$.
\STATE For every leaf node $u$ in $T$, split the set $X_u$ into two sets: 
$$
\text{\emph{Left}} = \{x\in X_u: x_i \leq \theta\}
\text{\;\;\;\;\;\;\;\;\;\; and\;\;\;\;\;\;\;\;\;\;}
\text{\emph{Right}} = \{x\in X_u: x_i > \theta\}.$$ 
If each of these sets contains at least one center from $C$, then create two children of $u$ in tree $T$ and assign sets \emph{Left} and \emph{Right} to the left and right child, respectively.
\STATE {}
\STATE Denote the updated tree by $T_{t+1}$.
\STATE Update $t = t+1$.
\ENDWHILE
\end{algorithmic}
\end{framed}
   \caption{Threshold tree construction for $k$-medians in $\ell_1$}
   \label{alg:threshold_tree_kmedians}   
\end{figure}
\section{Algorithms Overview}
We now give an overview of our algorithms.

\medskip

\noindent{\textbf{$k$-medians in $\ell_1$.}}  We begin with the algorithm for $k$-medians in $\ell_1$. We show that its competitive ratio is $O(\log^2 k)$ in Section~\ref{sec:k-medians} and then show an improved bound of $O(\log k\log \log k)$ in Section~\ref{sec:k-medians-improved}.

As the algorithm by~\citet{DFMR20}, our algorithm (see Algorithm~\ref{alg:threshold_tree_kmedians}) builds a binary threshold tree $T$ top-down. It starts with a tree containing only the root node $r$. This node is assigned the set of points $X_r$ that contains all points in the data set $X$ and all reference centers $c^i$. At every round, the algorithm picks some pair $\omega=(i,\theta)\in \Omega$ 
(as we discuss below) and then splits 
data points $x$ assigned to every \emph{leaf} node $u$ into two groups 
$\{x\in X_u: x_i \leq \theta\}$ and $\{x\in X_u:\ x_i > \theta\}$.
Here, $X_u$ denotes the set of points assigned to the node $u$. If this partition separates at least two centers $c^i$ and $c^j$, then the algorithm attaches two children to $u$ and assigns the first group to the left child and the second group to the right child. The algorithm terminates when all leaves contain exactly one reference center $c^i$. Then, we assign the points in each leaf of $T$ to its unique reference center. Note that the unique reference center in each leaf may not be the optimal center for points contained in that leaf. Thus, the total cost by assigning each point to the reference center in the same leaf of $T$ is an upper bound of the cost of threshold tree $T$. 

\footnotetext{As we discuss in Section~\ref{sec:fast_alg}, we can also let $R_t=A_t$. However, this change will make the analysis of the algorithm a little more involved.}

The algorithm by~\citet{DFMR20} picks splitting cuts in a greedy way. Our algorithm chooses them at random. 
Specifically, to pick a cut $\omega_t \in \Omega$ at round $t$, our algorithm finds the maximum distance $D_t$ between two distinct centers $c^i$, $c^j$ that belong to the same set $X_u$ assigned to a leaf node $u$ i.e., 
$$D_t = \max_{u \text{ is a leaf}} \max_{c^i,c^j \in X_u}\|c^i - c^j\|_1.$$
Then, we let $A_t$  be the set of all $\omega\in\Omega$ that separate at least one pair of centers; and $B_t$ be the set of all $\omega\in\Omega$ that separate two centers at distance at most $D_t/k^3$. We pick $\omega_t$ uniformly at random (with respect to measure $\mu$) from the set $R_t = A_t \setminus B_t$. 

Every $\omega \in R_t$ is contained in $A_t$, which means $\omega$ separates at least one pair of centers. Thus, our algorithm terminates in at most $k-1$ iterations. It is easy to see that the running time of this algorithm is polynomial in the number of clusters $k$ and dimension of the space $d$. In Section~\ref{sec:fast_alg}, we provide a variant of this algorithm with 
running time $\tilde O(kd)$.  

\noindent{\textbf{$k$-medians in $\ell_2$.}} Our algorithm for $k$-medians with $\ell_2$ norm recursively partitions the data set $X$ using the following idea. It finds the median point $m$ of all centers in $X$. Then, it repeatedly makes cuts that separate centers from $m$. To make a cut, the algorithm chooses a random coordinate $i \in \{1,\dots,d\}$, random number $\theta\in[0,R^2]$, and random sign $\sigma\in\{\pm 1\}$, where $R$ is the largest distance from a center in $X$  to the median point $m$. It then makes a threshold cut $(i,m _i + \sigma \sqrt{\theta})$. After separating more than half centers from $m$, the algorithm recursively calls itself for each of the obtained parts. In Section~\ref{sec:L2-k-medians}, we show that the \emph{price of explainability} for this algorithm is $O(\log^{\nicefrac32}k$).

\noindent{\textbf{$k$-means.}} We now move to the algorithm for $k$-means. This algorithm embeds the space $\ell_2$ into $\ell_1$ using a specially crafted \emph{terminal embedding} $\varphi$ (the notion of terminal embeddings was formally defined by~\citet{EFN17}). The embedding satisfies the following property for every center $c$ (terminal) and every point $x\in \ell_2$, we have
$$\|\varphi(x) - \varphi(c)\|_1\leq \|x-c\|_2^2 \leq 8k\cdot \|\varphi(x) - \varphi(c)\|_1.$$
Then, the algorithm partitions the data set $\varphi(X)$ with centers $\varphi(c^1), \dots, \varphi(c^k)$ using Algorithm~\ref{alg:threshold_tree_kmedians}. 
The expected cost of partitioning  is at most the distortion of the embedding ($8k$) times the competitive guarantee ($O(\log k\log \log k)$) of Algorithm~\ref{alg:threshold_tree_kmedians}.
In Section~\ref{sec:lb-kmeans}, we show an almost matching lower bound of $\Omega(k/\log k)$ on the cost of explainability for $k$-means. We also remark that the terminal embedding we use in this algorithm cannot be improved. This follows from the fact that the cost function $\|x-c\|_2^2$ does not satisfy the triangle inequality; while the $\ell_1$ distance $\|\varphi(x)-\varphi(c)\|_1$ does. 

\section{\texorpdfstring{Algorithm for $k$-medians in $\ell_1$}{Algorithms for k-medians in l1}}\label{sec:k-medians}
In this section, we analyse Algorithm~\ref{alg:threshold_tree_kmedians} for $k$-medians in $\ell_1$ and show that it provides an explainable 
clustering with cost at most $O(\log^2 k)$ times the original cost. We improve this bound to $O(\log k\log\log k)$ in Section~\ref{sec:k-medians-improved}.

Recall, all centers in $C$ are separated by the tree $T$ returned by the algorithm, and each leaf of $T$ contains exactly one center from $C$. For each point $x \in X$, we define its cost in the threshold tree $T$ as 
$$
\alg_{\ell_1}(x) = \norm{x-c}_1,
$$
where $c$ is the center in the same leaf in $T$ as $x$. Then, $\cost_{\ell_1}(X,T) \leq \sum_{x\in X} \alg_{\ell_1}(x)$ (note that the original centers $c^1,\dots, c^k$ used in the definition of $\alg_{\ell_1}(x)$ are not necessarily optimal for the tree $T$. Hence, the left hand side is not always equal to the right hand side.).
For every point $x \in X$, we also define 
$\mathrm{cost}_{\ell_1}(x,C) = \min_{c\in C} \|x-c\|_1$. Then, $\mathrm{cost}_{\ell_1}(X,C) = \sum_{x \in X} \mathrm{cost}_{\ell_1}(x,C)$
(see~(\ref{def:cost1})).

We prove the following theorem.
\begin{theorem}\label{thm:functional-L1-kmedians}
Given a set of points $X$ in $\bbR^d$ and a set of centers $C = \{c^1,\dots,c^k\}\subset \bbR^d$, Algorithm~\ref{alg:threshold_tree_kmedians} finds a threshold tree $T$ with expected $k$-medians in $\ell_1$ cost at most
$$
\E[\cost_{\ell_1}(X,T)] \leq O(\log^2 k)\cdot \cost_{\ell_1}(X,C).
$$
Moreover, the same bound holds for the cost of every point $x\in X$ i.e.,
$$
\E[\cost_{\ell_1}(x,T)] \leq O(\log^2 k)\cdot \cost_{\ell_1}(x,C).
$$
\end{theorem}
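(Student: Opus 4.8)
The plan is to bound the cost point by point; both displayed inequalities then follow, the second directly and the first by summing over $x$ (recall $\cost_{\ell_1}(X,T)\le\sum_x\alg_{\ell_1}(x)$ and $\cost_{\ell_1}(x,T)\le\alg_{\ell_1}(x)$). Fix $x\in X$, let $c=c^x\in C$ be a center nearest to $x$, and set $\rho:=\cost_{\ell_1}(x,C)=\norm{x-c}_1$. For points $p,q\in\bbR^d$ write $S_{p,q}:=\{\omega\in\Omega:\delta_p(\omega)\neq\delta_q(\omega)\}$, so $\mu(S_{p,q})=\norm{p-q}_1$ by \eqref{eq:isometry}. Let $c^{\mathrm{fin}}\in C$ be the unique center in the leaf of the final tree $T$ containing $x$, so $\alg_{\ell_1}(x)=\norm{x-c^{\mathrm{fin}}}_1\le\rho+\norm{c-c^{\mathrm{fin}}}_1$ by the triangle inequality; thus it suffices to show $\E\bigl[\norm{c-c^{\mathrm{fin}}}_1\bigr]\le O(\log^2 k)\,\rho$. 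The excess $\norm{c-c^{\mathrm{fin}}}_1$ is nonzero only if $x$ is ever separated from $c$; let $\tau$ be the first round at which the chosen cut $\omega_\tau$ puts $x$ and $c$ on opposite sides with a center on each side. Since the leaves of $T$ containing $x$ are nested and $c^{\mathrm{fin}}$ stays in $x$'s leaf forever after, $c$ and $c^{\mathrm{fin}}$ share a leaf just before round $\tau$, so $\norm{c-c^{\mathrm{fin}}}_1\le D_\tau$.

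Next I would record the two properties that the trimmed set $B_t$ in Algorithm~\ref{alg:threshold_tree_kmedians} is designed to give. First, a \emph{scale} lower bound: $\omega_\tau\in S_{c,c^{\mathrm{fin}}}$ (it moves $c^{\mathrm{fin}}$ to $x$'s side), and since $\omega_\tau\in R_\tau=A_\tau\setminus B_\tau$ while $(c,c^{\mathrm{fin}})$ is a not-yet-separated pair, the definition of $B_\tau$ forces $\norm{c-c^{\mathrm{fin}}}_1=\mu(S_{c,c^{\mathrm{fin}}})>D_\tau/k^3$. Hence, on $\{\tau<\infty\}$,
\[
D_\tau/k^3<\norm{c-c^{\mathrm{fin}}}_1\le D_\tau .
\]
Second, a lower bound on the normalizing measure: $\mu(A_t)\ge\mu(S_{i^\ast j^\ast})=D_t$ for the diametral not-yet-separated pair $(i^\ast,j^\ast)$, and $\mu(B_t)\le\binom{k}{2}\cdot D_t/k^3\le D_t/(2k)$, so (using $B_t\subseteq A_t$) $\mu(R_t)\ge D_t/2$. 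Therefore, conditioned on $x$ still sharing a leaf with $c$ at the start of round $t$,
\[
\pr[\tau=t\mid\calF_{t-1}]=\frac{\mu(S_{x,c}\cap R_t)}{\mu(R_t)}\le\frac{2\,\mu(S_{x,c}\cap R_t)}{D_t}\le\frac{2\rho}{D_t}.
\]

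The core estimate is a bound on $\pr[c^{\mathrm{fin}}=c']$ for each $c'\in C$, which I would obtain by a Calinescu--Karloff--Rabani-type argument. If $c^{\mathrm{fin}}=c'$, then, because $c'$ always shares $x$'s leaf, the round $\tau$ at which $x$ leaves $c$ is exactly the round at which $c$ and $c'$ are first separated, no cut chosen before $\tau$ lies in $S_{x,c}\cup S_{x,c'}$, the cut $\omega_\tau$ lies in $(S_{x,c}\setminus S_{x,c'})\cap R_\tau\subseteq S_{c,c'}$ with $\mu(S_{x,c}\setminus S_{x,c'})\le\min(\rho,\norm{c-c'}_1)$, and $D_\tau\in[\norm{c-c'}_1,\,k^3\norm{c-c'}_1)$. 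Ordering the centers as $c=c_{(1)},c_{(2)},\dots$ by $\ell_1$ distance to $x$, the requirement that $c'=c_{(j)}$ ``win the race'' among the centers at least as relevant to $x$ should contribute a factor $1/j$ on top of the per-round bound $2\rho/D_\tau\le 2\rho/\norm{c-c_{(j)}}_1$, while the factor-$k^3$ slack in the scale forces summing the per-round bound over $\Theta(\log k)$ dyadic ranges of $D_\tau$; together this gives $\pr[c^{\mathrm{fin}}=c_{(j)}]\le O\!\bigl(\tfrac{\rho\log k}{\norm{c-c_{(j)}}_1}\cdot\tfrac1j\bigr)$, whence
\[
\E\bigl[\norm{c-c^{\mathrm{fin}}}_1\bigr]=\sum_{j}\norm{c-c_{(j)}}_1\,\pr[c^{\mathrm{fin}}=c_{(j)}]\le O(\rho\log k)\sum_{j=1}^{k}\frac1j=O(\rho\log^2 k).
\]

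The main obstacle is the ``random race'' step: Algorithm~\ref{alg:threshold_tree_kmedians} does not literally cut off the center-regions around $x$ in a uniformly random order, so the harmonic $1/j$ factor is not immediate. A plain union bound over the (up to $k-1$) rounds in a single dyadic range of $D_t$ only yields $\pr[x\text{ separated there}]\le O(k)\cdot\rho/D_t$, which merely recovers the $O(k)$ guarantee of \citet{DFMR20}. Replacing this with an $O(\log k)\cdot\rho/D_t$-type amortized estimate is the heart of the argument and must use the trimming essentially: each $\omega\in S_{x,c}\cap R_t$ that can actually pull $x$ off $c$ simultaneously separates a pair of centers at distance $\Theta(D_t)$, which limits how many such ``pulls'' can be charged against the measure $\rho$ of $S_{x,c}$ over the life of a scale. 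I would isolate this amortized bound (and the bookkeeping across the few contributing scales) as the key lemma; the remainder of the proof is as sketched above.
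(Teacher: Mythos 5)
Your setup is sound and matches the paper's up to the point where the real work begins: the reduction to a pointwise bound on $\E[D_\tau]$, the two consequences of the trimming ($D_\tau/k^3<\|c-c^{\mathrm{fin}}\|_1\le D_\tau$ and $\mu(R_t)\ge D_t/2$), and the per-round bound $\pr[\tau=t\mid\calF_{t-1}]\le 2\rho/D_t$ are all correct. But the proposal stops exactly at the central estimate. You correctly observe that a union bound over the rounds within one dyadic range of $D_t$ only gives $O(k)\cdot\rho/D$, and that an amortization is needed to replace $k$ by $O(\log k)$; you then defer this to an unproved ``key lemma'' and gesture at a Calinescu--Karloff--Rabani-style race yielding a harmonic $1/j$ factor. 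That race argument is not available here: the algorithm does not cut off the centers in a uniformly random order (it samples a cut uniformly from $R_t$, whose composition changes with the history), and you yourself acknowledge that the $1/j$ factor ``is not immediate.'' So the claimed bound $\pr[c^{\mathrm{fin}}=c_{(j)}]\le O\bigl(\rho\log k/(j\,\|c-c_{(j)}\|_1)\bigr)$ is unsupported, and with it the whole conclusion.

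The paper closes this gap by a different mechanism. It linearizes the penalty over cut locations, writing the expected penalty as $\int_\Omega|\delta_x(\omega)-\delta_c(\omega)|\cdot\E\bigl[\sum_t W_t(\omega)\bigr]\,d\mu(\omega)$ with $W_t(\omega)=D_t\ind\{\omega\in R_t\}/\mu(R_t)$, and then proves (Lemma~\ref{lem:weight}) that $\E\bigl[\sum_t W_t(\omega)\bigr]\le O(\log^2 k)$ for each \emph{fixed} $\omega$. The amortization within a dyadic phase is exactly your ``each cut that can pull $x$ off $c$ also separates a pair at distance $\Theta(D_t)$'' intuition, made precise as follows: every surviving set $S_{ij}$ with $\mu(S_{ij})\ge D/2$ is destroyed at step $t$ with conditional probability at least $W_t(\omega)/4$, and a supermartingale/optional-stopping argument (Claim~\ref{claim:sum-probs}, via Doob's maximal inequality applied to $Z_t=e^{\sum_{t'<t}p_{t'}}\ind\{e\in E_t\}$) shows that the cumulative conditional probabilities accumulated before all $\binom{k}{2}$ such sets are destroyed total at most $\ln\binom{k}{2}+O(1)$. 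That stopping-time estimate is precisely the missing key lemma; without it (or an equivalent), your argument only recovers the $O(k)$ bound you mention.
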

\begin{proof}

Let $T_t$ be the threshold tree constructed by Algorithm~\ref{alg:threshold_tree_kmedians} before iteration $t$. Consider a point $x$ in $X$. 
If $x$ is separated from its original center in $C$ by the cut generated at iteration $t$, then $x$ will be eventually assigned to
some other center in the same leaf of $T_t$. By the triangle inequality, the new cost of $x$ at the end of the algorithm will be
at most $\cost_{\ell_1}(x,C)+D_t$, where $D_t$ is the maximum diameter of any leaf in $T_t$ (see Algorithm~\ref{alg:threshold_tree_kmedians}). 
Define a penalty function $\phi_t(x)$ as follows: $\phi_t(x) = D_t$ if $x$ is separated from its original center $c$ at time $t$; $\phi_t(x)=0$, otherwise. Note that $\phi_t(x) \neq 0$ for at most one iteration $t$, and 
\begin{equation}\label{eq:l1-penalty}
\alg_{\ell_1}(x) \leq \cost_{\ell_1}(x,C) + \sum_{t} \phi_t(x).    
\end{equation}
The sum in the right hand side is over all iterations of the algorithm. We bound the expected penalty $\phi_t(x)$ for each~$t$.

\begin{lemma}\label{lem:recursive}
The expected penalty $\phi_{t}(x)$ is upper bounded as follows:
$$
\E[\phi_{t}(x)] \leq \E\bigg[D_t \cdot \int_{\Omega} 
|\delta_x(\omega) - \delta_c(\omega)|\cdot \frac{\ind\{\omega\in R_t\}}{\mu(R_t)} \,d\mu(\omega)\bigg],
$$
where $c$ is the closest center to the point $x$ in $C$; $\ind\{\omega\in R_t\}$ is the indicator of the event 
$\omega \in R_t$.
\end{lemma}
\begin{proof}
If $x$ is already separated from its original center $c$ at iteration $t$, then $\phi_t(x)=0$. Otherwise, $x$ and $c$ are separated at iteration $t$ if for the random pair 
$\omega_t = (i, \theta)$ chosen from $R_t$ in Algorithm~\ref{alg:threshold_tree_kmedians}, we have $\delta_x(\omega_t)\neq \delta_c(\omega_t)$. Write,
$$\E[\phi_t(x)]\leq\E\Big[\pr_{\omega_t}[\delta_x(\omega_t)\neq \delta_c(\omega_t)\mid T_t]\cdot D_t\Big].$$
The probability that $\delta_x(\omega_t)\neq \delta_c(\omega_t)$ given $T_t$ is bounded as
\begin{align*}
\pr_{\omega_t}[\delta_x(\omega_t&)\neq \delta_c(\omega_t)\mid T_t] = 
\frac{\mu\{\omega\in R_t: \delta_x(\omega) \neq \delta_c(\omega)\}}{\mu(R_t)}
\\
&=\int_{\Omega} \ind\{\delta_x(\omega) \neq \delta_c(\omega)\}\cdot \frac{\ind\{\omega\in R_t\}}{\mu(R_t)} \,d\mu(\omega)
\\
&=\int_{\Omega} |\delta_x(\omega) - \delta_c(\omega)|\cdot \frac{\ind\{\omega\in R_t\}}{\mu(R_t)} \,d\mu(\omega).
\end{align*}
\end{proof}
Let 
$$W_t(\omega) = \frac{D_t\cdot \ind\{\omega\in R_t\}}{\mu(R_t)} .$$
Then, by Lemma~\ref{lem:recursive} and inequality~(\ref{eq:l1-penalty}), we have
$$
\E[\alg_{\ell_1}(x)] \leq \cost_{\ell_1}(x,C) 
+ \E\Big[\sum_{t} \int_{\Omega}  |\delta_x(\omega)- \delta_c(\omega)|\cdot W_t(\omega) \,d\mu(\omega)\Big].
$$
The upper bound on the expected cost of $x$ in tree $T$ consists of two terms: The first term is the original cost of $x$. The second term is a bound on the expected penalty incurred 
by~$x$. We now bound the second term  as $O(\log^2 k)\cdot \cost_{\ell_1}(x,C)$.
$$
\E\Big[\sum_{t} \int_{\Omega}  |\delta_x(\omega)- \delta_c(\omega)|\cdot W_t(\omega) \,d\mu(\omega)\Big]
=
\int_{\Omega}  |\delta_x(\omega)- \delta_c(\omega)|\cdot \E\Big[\sum_{t} W_t(\omega) \Big]
\,d\mu(\omega).
$$

By H\"older's inequality, the right hand side is upper bounded by the following product:
$$
\|\delta_x - \delta_c\|_1
\cdot\max_{\omega\in \Omega} 
\E\Big[\sum_{t} W_t(\omega)\Big].
$$
The first multiplier in the product exactly equals $\|x-c\|_1$
(see~Equation~\ref{eq:isometry}), which, in turn, equals $\cost_{\ell_1}(x,C)$. Hence, to finish the proof of Theorem~\ref{thm:functional-L1-kmedians}, we need to upper bound the second multiplier by $O(\log^2 k)$.

\begin{lemma}\label{lem:weight}
For all $\omega \in \Omega$, we have
$$
\E\Big[\sum_{t} W_t(\omega)\Big] \leq O(\log^2 k).
$$
\end{lemma}
\noindent \textit{Proof.}
Let $t'$ be the first iteration and $t''$ be the last iteration for which $W_t(\omega) > 0$. First, we prove that $D_{t''}\geq D_{t'}/k^3$, where $D_{t'}$ and $D_{t''}$ are the maximum cluster diameters at iterations $t'$ and $t''$, respectively. Since $W_{t'}(\omega) > 0$ and $W_{t''}(\omega) > 0$, we have $\ind\{\omega\in R_{t'}\} \neq 0$ and $\ind\{\omega\in R_{t''}\} \neq  0$. Hence, $\omega\in R_{t'}$ and $\omega\in R_{t''}$. Since $\omega\in R_{t''}$, there exists a pair $(i,j)\in E_{t''}$ for which $\omega\in S_{ij}$. For this pair, we have $D_{t''}\geq \mu(S_{ij})$.
Observe that the pair $(i,j)$ also belongs to $E_{t'}$, since $E_{t''}\subset E_{t'}$. Moreover, $\mu(S_{ij}) > D_{t'}/k^3$, because otherwise, $S_{ij}$ would be included in $B_{t'}$ (see Algorithm~\ref{alg:threshold_tree_kmedians}) and, consequently, $\omega$
would not belong to $R_{t'}=A_{t'}\setminus B_{t'}$. Thus
\begin{equation}\label{eq:D-D-prime}
D_{t''}\geq \mu(S_{ij})> D_{t'}/k^3.
\end{equation}

By the definition of $t'$ and $t''$, we have
$$
\sum_{t} W_t(\omega) = \sum_{t=t'}^{t''} W_t(\omega) \leq \sum_{t=t'}^{t''} \frac{D_{t}}{\mu(R_t)}.
$$
Note that the largest distance $D_t$ is a non-increasing (random) function of $t$. Thus, 
we can split the iterations of the algorithm $\{t',...,t''\}$ into $\ceil{3\log k}$ phases. At phase $s$, the maximum diameter $D_t$ is in the range
$(D_{t'}/2^{s+1},D_{t'}/2^s]$.
Denote the set of all iterations in phase $s$ by $\Phase(s)$.

Consider phase $s$. Let $D = D_{t'}/2^{s}$. Phase $s$ ends when all sets $S_{ij}$ with $\mu(S_{ij})\geq D/2$ are removed from the set $E_t$. Let us estimate the probability that one such set $S_{ij}$ is removed from $E_t$ at iteration $t$. Set $S_{ij}$ is removed from $E_t$ if the random threshold cut $\omega_t$ chosen at iteration $t$
separates centers $c_i$ and $c_j$, or, in other words, if $w_t\in S_{ij}$. The probability of this event equals:
$$
\pr[\omega_t \in S_{ij}\mid T_t]  = \frac{\mu(S_{ij}\cap R_t)}{\mu(R_t)}
= \frac{\mu(S_{ij})-\mu(S_{ij}\cap B_t)}{\mu(R_t)}\geq \frac{\mu(S_{ij})-\mu(B_t)}{\mu(R_t)}.
$$
Note that $\mu(S_{ij}) > D/2\geq D_t/2$ and $\mu(B_t)< \binom{k}{2} \cdot \frac{D_t}{k^3} < \frac{D_t}{2k}$ (because $B_t$ is the union of at most $\binom{k}{2}$ sets of measure at most $D_t/k^3$ each). Hence, 
$$\pr[\omega_t \in S_{ij}\mid T_t]
\geq \frac{D_t}{4\mu(R_t)}
\geq \frac{1}{4} W_t(\omega).
$$
If $W_t(\omega)$ did not depend on $t$, then we would argue that each set $S_{ij}$ (with $\mu(S_{ij})\geq D/2$) is removed from $E_t$ in 
at most $4/W_t(\omega)$ iterations, in expectation, and, consequently, all sets $S_{ij}$ are removed in at most $O(\log k) \cdot 4/W_t(\omega)$ iterations, in expectation (note that the number of sets $S_{ij}$ is upper bounded by $\binom{k}{2}$). Therefore, 
$$
\E\Big[\sum_{t\in \Phase(s)} W_t(\omega)\Big] \leq 
O(\log k) \cdot \frac{4}{W_t(\omega)}\cdot W_t(\omega)
= O(\log k).
$$
However, we cannot assume that $W_t(\omega)$ is a constant. Instead, we use the following claim with 
$E=\{0,\dots,k-1\}\times\{0,\dots,k-1\}$, $E'_t=\{(i,j)\in E_t: \mu(S_{ij})\geq D/2\}$,
and $p_t = W_t(\omega)/4$.

\begin{claim}\label{claim:sum-probs}
Consider two stochastic processes $E_t$ and $p_t$ adapted to filtration $\calF_t$. The values of $E_t$ are subsets of some finite non-empty set $E$. The values of $p_t$ are numbers in $[0,1]$. Suppose that for every step $t$, $E_{t+1} \subset E_t$ and for every $e\in E_t$, $\Pr[e\notin E_{t+1}\mid \calF_t] \geq p_t$. Let 
$\tau$ be the (stopping) time $t$ when $E_t=\varnothing$.
Then,
$$\E\bigg[\sum_{t=0}^{\tau - 1} p_t \bigg]\leq \ln |E| + O(1).$$
\end{claim}

\begin{proof}[Proof of Claim~\ref{claim:sum-probs}]
Let $\tau_e$ be the first time $t\geq 1$ when element $e$ does not belong to $E_t$. Then, $\tau=\max_{e\in E} \tau_e$. Hence,
$$\sum_{t=0}^{\tau - 1} p_t = \max_{e\in E} \sum_{t=0}^{\tau_e - 1} p_t.$$
By the union bound, for all $\lambda \geq 0$, we have 
\begin{equation}\label{eq:u-bound}
\pr\bigg[\sum_{t=0}^{\tau - 1} p_t \geq \lambda\bigg]
\leq
\sum_{e\in E}\pr\bigg[\sum_{t=0}^{\tau_e - 1} p_t \geq \lambda\bigg].
\end{equation}
Define a new stochastic process $Z_t(e)$ as follows: $Z_0(e)=1$ and for $t\geq 1$,
$$Z_t(e) = 
\begin{cases}
e^{\sum_{t'=0}^{t-1} p_{t'}},&\text{if } e\in E_t;\\
0,&\text{otherwise}.
\end{cases}$$
Note that if $\sum_{t=0}^{\tau_e - 1} p_t\geq \lambda$, then $\max_{t\geq 0} Z_t(e)\geq e^{\lambda - 1}$. Thus, we will bound $\Pr[\max_{t\geq 0} Z_{t}(e)\geq e^{\lambda-1}]$.
Observe that $Z_t$ is a supermartingale, since 
\begin{align*}
\E[Z_{t+1} \mid \calF_t] &= \Pr[e\in E_{t+1}\mid \calF_t]\cdot e^{p_{t}}\cdot Z_t 
\\ &\leq (1-p_t)\cdot e^{p_{t}}\cdot Z_t\leq Z_t.
\end{align*}
By Doob's maximal martingale inequality, we have 
$$
\Pr[\max_{t\geq 0} Z_{t}(e)\geq e^{\lambda-1}] \leq 
Z_0(e)/ e^{\lambda-1} = e^{-(\lambda-1)}.
$$
Using~(\ref{eq:u-bound}), we get 
$$
\pr\bigg[\sum_{t=0}^{\tau - 1} p_t \geq \lambda\bigg]
\leq |E|\cdot e^{-(\lambda-1)}.
$$
Therefore,
\begin{align*}
\E\bigg[\sum_{t=0}^{\tau - 1} p_t\bigg] &= 
\int_0^\infty \pr\bigg[\sum_{t=0}^{\tau - 1} p_t \geq \lambda\bigg] d\lambda \leq \ln |E| + \int_{\ln |E|}^\infty |E|\cdot e^{-(\lambda-1)} d\lambda\\
&= \ln |E| + |E|e^{-\ln |E| + 1} = \ln|E|+e.
\end{align*}
\end{proof}

By Claim~\ref{claim:sum-probs}, 
$$\E\Big[\sum_{t\in \Phase(s)} W_t(\omega)\Big] \leq O(\log k).$$
The expected sum of $W_t$ over all phases is upper bounded by $O(\log^2 k)$, since the number of phases is upper bounded by $O(\log k)$. We note that if the number of phases is upper bounded by $L$, then the expected sum of $W_t$ over all phases is upper bounded by $O(L \log k)$. This concludes the proofs of Lemma~\ref{lem:weight}
and Theorem~\ref{thm:functional-L1-kmedians}.
\end{proof}
\section{\texorpdfstring{Improved Analysis for $k$-medians in $\ell_1$}{Improved Analysis for k-medians in l1}}\label{sec:k-medians-improved}
In this section, we provide an improved analysis of our algorithm for $k$-medians in $\ell_1$. 

\begin{theorem}\label{thm:improved}
Given a set of points $X$ in $\bbR^d$ and set of centers $C = \{c^1,\dots,c^k\}\subset \bbR^d$, Algorithm~\ref{alg:threshold_tree_kmedians} finds a threshold tree $T$ with expected $k$-medians $\ell_1$ cost at most
$$
\E[\cost_{\ell_1}(X,T)] \leq O(\log k\log \log k)\cdot \cost_{\ell_1}(X,C).
$$
\end{theorem}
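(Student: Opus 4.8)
The plan is to retain every step in the proof of Theorem~\ref{thm:functional-L1-kmedians} verbatim — the penalty decomposition~(\ref{eq:l1-penalty}), Lemma~\ref{lem:recursive}, the isometric embedding~(\ref{eq:isometry}), and the H\"older step — and to strengthen only the bound of Lemma~\ref{lem:weight}. None of those steps depend on the value of the bound, so Theorem~\ref{thm:improved} follows once we prove
$$\max_{\omega\in\Omega}\ \E\Big[\sum_t W_t(\omega)\Big]=O(\log k\log\log k),\qquad\text{where }\ W_t(\omega)=\frac{D_t\,\ind\{\omega\in R_t\}}{\mu(R_t)}\le 2.$$

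The proof of Lemma~\ref{lem:weight} incurs two logarithmic losses. First, the iterations on which $W_t(\omega)>0$ are split into $\Theta(\log k)$ diameter phases; this is essentially forced, since by~(\ref{eq:D-D-prime}) the diameter $D_t$ can touch $\Theta(\log k)$ distinct dyadic scales over a span of a factor $k^3$. Second, inside each phase Claim~\ref{claim:sum-probs} is applied to the family of all $\binom k2$ unseparated ``large'' pairs, which costs $\ln\binom k2=\Theta(\log k)$. As the remark closing Section~\ref{sec:k-medians} observes, it suffices to cut the number of phases down to $O(\log\log k)$ while keeping the per-phase cost $O(\log k)$. So I would replace the dyadic diameter phases by a geometrically accelerating schedule: phase $s$ contracts $D_t$ by a factor $2^{2^{s}}$, so the total factor $k^3$ is exhausted after only $O(\log\log k)$ phases.

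The obstacle — and I expect this to be essentially all of Section~\ref{sec:k-medians-improved} — is that the per-phase argument of Lemma~\ref{lem:weight} does not survive \emph{fat} phases. That argument bounds $\Pr[\omega_t\in S_{ij}\mid T_t]\ge(\mu(S_{ij})-\mu(B_t))/\mu(R_t)\ge\tfrac14 W_t(\omega)$, which uses crucially that $\mu(S_{ij})\ge D_t/2$; but inside a phase of multiplicative width $2^{2^s}$ there are unseparated pairs of ``medium'' measure — anywhere between $D_t/k^3$ and $D_t/2$ — which keep $\omega$ inside $R_t$ yet are separated by a uniformly random cut of $R_t$ only with probability $o(W_t(\omega))$. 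Charging such an iteration instead to a \emph{current diameter-achieving} pair (whose separation probability is $\ge\tfrac12 W_t(\omega)$ and which strictly decreases $D_t$ when separated) only relocates the problem, since the set of pairs that are ever diameter-achieving inside one fat phase is again of size up to $\binom k2$ and their measures span the whole width of the phase.

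To make this work I would use a genuinely two-parameter accounting: classify each iteration jointly by the current diameter scale and by the scale $q_t\,D_t$ of the largest unseparated pair through $\omega$, where $q_t\in(k^{-3},1]$; run a separate instance of Claim~\ref{claim:sum-probs} for each of the $O(\log k)$ pairs of scales; and, crucially, couple the two coupon-collector processes — the stepwise decrease of $D_t$ (driven by separating diameter-achieving pairs) and the eventual exit of $\omega$ from $R_t$ (driven by separating pairs through $\omega$) — so that a medium-measure pair through $\omega$ is ``billed'' at most once, at the moment it becomes the current diameter rather than at every larger scale it survives. The target is to show that, with the accelerating schedule, each of the $O(\log\log k)$ phases still costs $O(\log k)$ in expectation despite the medium pairs; plugging $L=O(\log\log k)$ into the $O(L\log k)$ bound of the concluding remark of Section~\ref{sec:k-medians} then gives $\E[\sum_tW_t(\omega)]=O(\log k\log\log k)$, and Theorem~\ref{thm:improved} follows exactly as in Section~\ref{sec:k-medians}.
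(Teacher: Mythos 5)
Your proposal stops at a plan rather than a proof, and the gap you flag yourself is exactly where it breaks. After correctly observing that a fat phase of multiplicative width $2^{2^s}$ destroys the inequality $\pr[\omega_t\in S_{ij}\mid T_t]\geq \tfrac14 W_t(\omega)$ (which needs $\mu(S_{ij})\geq D_t/2$), you propose a ``two-parameter accounting'' with a coupling of two coupon-collector processes and then write ``the target is to show that \dots each of the $O(\log\log k)$ phases still costs $O(\log k)$.'' That target is never established, and it is far from routine: a pair through $\omega$ of measure between $D_t/k^3$ and $D_t/2$ keeps $\omega\in R_t$ while being separated with probability that can be a $1/\poly(k)$ fraction of $W_t(\omega)$, so neither Claim~\ref{claim:sum-probs} nor a per-scale rerun of it controls how long $\omega$ survives inside a fat phase. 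More fundamentally, your opening premise --- that it suffices to strengthen Lemma~\ref{lem:weight} to $\max_{\omega}\E[\sum_t W_t(\omega)]=O(\log k\log\log k)$ for the \emph{unmodified} weight $W_t$ --- is a statement the paper never proves and gives no reason to believe; the entire structure of Section~\ref{sec:k-medians-improved} exists because the improvement has to be made dependent on the point $x$ and its center $c$, not uniform over $\omega$.

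What the paper actually does is leave Lemma~\ref{lem:weight} alone and refine the \emph{penalty accounting}. For each $x$ it keeps three bounds on the post-separation cost and classifies the separating cut as light ($D^{max}_t(x,\omega)\leq 2\|x-c\|_1$, cost at most $2\cost_{\ell_1}(x,C)$ outright), medium ($D^{min}_t(x,\omega)\leq D_t/\log^4 k$, handled by recursively invoking the already-proved $O(\log^2 k)$ bound of Theorem~\ref{thm:functional-L1-kmedians} on the new leaf, so the penalty is $O(\phi_t(x)/\log^2 k)$ and the $\log^4 k$ threshold absorbs the $\log^2 k$ recursion loss), or heavy. Only for heavy cuts does a weight sum reappear, and it is the restricted weight $\widetilde W_t(\omega)=D_t\ind\{\omega\in H_t\}/\mu(R_t)$: Claim~\ref{cl:D-D-star} uses the monotonicity of $t\mapsto D^{min}_t(x,\omega)$ and the triangle inequality to show $D_{t^{**}}\geq D_{t^*}/(2\log^4 k)$, so the diameter range over which $\omega$ can be heavy spans a factor of only $2\log^4 k$. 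The phases therefore remain \emph{dyadic}, the per-phase argument of Lemma~\ref{lem:weight} (including Claim~\ref{claim:sum-probs}) applies verbatim, and there are only $O(\log\log k)$ of them --- which is how the $O(L\log k)$ remark gets instantiated with $L=O(\log\log k)$. In short, the paper shrinks the number of phases by restricting \emph{which cuts are charged}, not by widening the phases; your route keeps all cuts and widens the phases, which is precisely the configuration in which the per-phase bound is not known to hold.
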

\begin{proof}
In the proof of Theorem~\ref{thm:functional-L1-kmedians}, we used  a pessimistic estimate on the penalty a point $x\in X$ incurs when it is separated from its original center $c$. Specifically, we bounded the penalty by the maximum diameter of any leaf in the tree $T_t$. In the current proof, we will use an additional bound: The distance from $x$ to the closest center after separation. Suppose, that $x$ is separated from its original center $c$. Let $c'$ be the closest center to $x$ after we make cut $\omega_t$ at step $t$. That is, $c'$ is the closest center to $x$ in the same leaf of the threshold tree $T_{t+1}$. Note that after we make additional cuts, $x$ may be separated from its new center $c'$ as well, and the cost of $x$ may increase. However, as we already know, the expected cost of $x$ may increase in at most $O(\log^2 k)$ times in expectation (by Theorem~\ref{thm:functional-L1-kmedians}). Here, we formally apply Theorem~\ref{thm:functional-L1-kmedians} to the leaf where $x$ is located and treat $c'$ as the original center of $x$. Therefore, if $x$ is separated from $c$ by a cut $\omega_t$ at step $t$, then the expected cost of $x$ in the end of the algorithm is upper bounded by
\begin{equation}
\E[\alg_{\ell_1}(x)\mid T_t,\omega_t] \leq O(\log^2 k)\cdot \|c'-x\|_1
= O(\log^2 k)\cdot D^{min}_t(x,\omega_t).\nonumber
\end{equation}
In the formula above, we used the following definition: $D^{min}_t(x,\omega)$ is the distance from $x$ to the closest center $c'$ in the same leaf of $T_t$ as $x$ which is not separated from $x$ by the cut $\omega$ i.e.,
$\delta_x(\omega)=\delta_{c'}(\omega)$. 
If there are no such centers $c'$ (i.e., cut $\omega$ separates $x$ from all centers), then we let $D^{min}_t(x,\omega) = 0$. Note that in this case, our algorithm will never make cut $\omega$, since it always makes sure that the both parts of the cut contain at least one center from $C$. Similarly to $D^{min}_t(x,\omega)$, we define $D^{max}_t(x,\omega)$: $D^{max}_t(x,\omega)$ is the distance from $x$ to the farthest center $c''$ in the same leaf of $T_t$ as $x$ which is not separated from $x$ by the cut $\omega$. We also let  $D^{max}_t(x,\omega) = 0$ if there is no such $c''$. Note that $D^{max}_t(x,\omega)$ is an upper bound on the cost of $x$ in the eventual threshold tree $T$ if cut $\omega$ separated $x$ from $c$ at step~$t$.

We now have three bounds on the expected cost of $x$ in the final tree $T$ given that the algorithm separates $x$ from its original center $c$ at step $t$ with cut $\omega$. 
The first bound is $D_t^{max}(x,\omega)$; the second bound is  $O(\log^2 k)\cdot D^{min}_t(x,\omega)$, and the third bound is $\|x-c\|_1+D_t$. We use the first bound if $D^{max}_t(x,\omega)\leq 2\|x-c\|_1$. We call such cuts $\omega$ light cuts. We use the second bound if $D^{max}_t(x,\omega) > 2\|x-c\|_1$ but $D^{min}_t(x,\omega)\leq D_t/\log^4 k$. We call such cuts $\omega$ medium cuts. We use the third bound if $D^{max}_t(x,\omega)> 2\|x-c\|_1$ and $D^{min}_t(x,\omega)> D_t/\log^4 k$. We call such cuts $\omega$ heavy cuts.

Note that in the threshold tree returned by the algorithm, one and only one of the following may occur: (1) $x$ is separated from the original center $c$ by a light, medium, or heavy cut; (2) $x$ is not separated from $c$.  We now estimate expected penalties due to light, medium, or heavy cuts.

If the algorithm makes a light cut, then the maximum cost of point $x$ in $T$ is at most $2\|x-c\|_1 = 2\cost_{\ell_1}(x,C)$. So we should not worry about such cuts. If the algorithm makes a medium cut, then the expected additional penalty for $x$ is upper bounded by 
$$D^{min}_t(x,\omega_t)\cdot O(\log^2 k) \leq O(\phi_t(x)/\log^2 k),$$
where $\phi_t(x)$ is the function from the proof of Theorem~\ref{thm:functional-L1-kmedians}. Thus, the total expected penalty due to a medium cut (added up over all steps of the algorithm)  is $\Omega(\log^2 k)$ times smaller than the penalty we computed in the proof of Theorem~\ref{thm:functional-L1-kmedians}. Therefore, the expected penalty due to a medium cut is at most $O(\|x-c\|_1)$.

We now move to heavy cuts. Denote the set of possible heavy cuts for $x$ in $R_t$ by $H_t$. That is, if $x$ is not separated from its original center $c$ by step $t$, then
$$
H_t=\Big\{\omega\in R_t:\;  D^{min}_t(x,\omega) > D_t/\log^4 k \text{ and } 
D^{max}_t(x,\omega) > 2\|x-c\|_1\Big\}.
$$
Otherwise, let $H_t = \varnothing$.
Define a density function $\widetilde W_t(\omega)$ similarly to 
$W_t(\omega)$:
$$\widetilde W_t(\omega) = \frac{D_t\cdot \ind\{\omega \in H_t\}}{\mu(R_t)} .$$
Then, the expected penalty due to a heavy cut is bounded, similarly to Lemma~\ref{lem:recursive},
by 
$$
\sum_t \E\Bigg[\int_0^1 |\delta_x(\omega)-\delta_c(\omega)|\cdot \widetilde W_t \,d\mu(\omega)\Bigg].
$$
Therefore, to finish the proof of Theorem~\ref{thm:kmedians-improved}, we need to prove the following analog of Lemma~\ref{lem:weight}.
\begin{lemma}\label{lem:tilde-weight}
For all $\omega \in \Omega$, we have
$$
\E\Big[\sum_{t} \widetilde W_t(\omega)\Big] \leq O(\log k\log \log k).
$$
\end{lemma}
\noindent\textit{Proof.}
As in the proof of Lemma~\ref{lem:weight}, consider the first and last steps when 
$\widetilde W_t(\omega) > 0$. Denote these steps by $t^{*}$ and $t^{**}$, respectively. In the proof  of Lemma~\ref{lem:weight}, we had a bound $D_{t''}\geq D_{t'}/k^3$ (see inequality~(\ref{eq:D-D-prime})). We now show a stronger bound on $t^{*}$ and $t^{**}$.
\begin{claim}\label{cl:D-D-star} We have
$D_{t^{**}} \geq  \nicefrac{D_{t^*}}{2\log^4 k}$.
\end{claim}
This claim implies that the number of phases defined in Lemma~\ref{lem:weight} is bounded by $O(\log \log k)$, which immediately implies Lemma~\ref{lem:tilde-weight}. So, to complete the proof, it remains to show Claim~\ref{cl:D-D-star}.

\noindent\textit{Proof of Claim~\ref{cl:D-D-star}}
First, note that $\ind\{\omega \in H_{t^{**}}\}> 0$ and, consequently, cut $\omega$ is heavy at step $t^{**}$. Thus, $D^{min}_{t^{**}}(x,\omega)$ is positive. Hence, this cut separates $c$ from at least one other center $c'$ in the same leaf of the current threshold tree $T_{t^{**}}$. Let $c''$ be the farthest such center from point $x$. Then, $\|c''-x\|_1 = D^{max}_{t^{**}}(x,\omega)$. Since centers $c$ and $c''$ are not separated prior to step $t^{**}$, we have
$$D_{t^{**}}\geq \|c-c''\|_1
\geq \|x-c''\|_1 - \|x-c\|_1.$$
Since $\omega$ is a heavy cut and not a light cut, $\|x-c''\|_1 > 2 \|x-c\|_1$. Thus,
$$D_{t^{**}}\geq \frac{\|x-c''\|_1}{2}
= 
\frac{D^{max}_{t^{**}}(x,\omega)}{2} \geq 
\frac{D^{min}_{t^{**}}(x,\omega)}{2}.$$
Now, observe that the  random process $D^{min}_{t^{**}}(x,\omega)$ is non-decreasing (for fixed $x$ and $\omega$) since the distance from $x$ to the closest center $c'$ cannot decrease over time. Therefore,
$$D_{t^{**}}\geq \frac{D^{min}_{t^{**}}(x,\omega)}{2}\geq
\frac{D^{min}_{t^{*}}(x,\omega)}{2}\geq \frac{D_{t^{*}}}{2\log^4 k}.$$
In the last inequality, we used that $\omega$ is a heavy cut at time $t^{*}$. This finishes the proof of Claim~\ref{cl:D-D-star}.
\end{proof}

\section{\texorpdfstring{Terminal Embedding of $\ell_2^2$ into $\ell_1$}{Terminal Embedding of squared l2 into l1}}

In this section, we show how to construct a coordinate cut preserving terminal embedding of $\ell_2^2$ (squared Euclidean distances) into $\ell_1$ with distortion $O(k)$ for every set of terminals $K \subset \bbR^d$ of size $k$.

Let $K$ be a finite subset of points in $\bbR^d$. We say that $\varphi: x \mapsto \varphi(x)$ is a terminal embedding of $\ell_2^2$ into $\ell_1$ with a set of terminals $K$ and distortion $\alpha$ if for every terminal $y$ in $K$ and every point $x$ in $\bbR^d$, we have 
$$\|\varphi(x) - \varphi(y)\|_1 \leq \|x-y\|_2^2 \leq \alpha \cdot \|\varphi(x) - \varphi(y)\|_1. $$

\begin{lemma}\label{lem:embedding}
For every finite set of terminals $K$ in $\bbR^d$, there exists a coordinate cut preserving terminal embedding of $\ell_2^2$ into $\ell_1$
with distortion $8|K|$.
\end{lemma}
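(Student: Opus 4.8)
\textbf{Proof proposal for Lemma~\ref{lem:embedding}.}

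The plan is to reduce the construction of a cut-preserving terminal embedding of $\ell_2^2$ into $\ell_1$ to two ingredients: first, a \emph{coordinate-wise} reformulation of the squared Euclidean distance, and second, a bound on how much the cross terms contribute relative to $\|x-y\|_2^2$ when $y$ is a terminal. Recall that for any $x, y \in \bbR^d$ we have $\|x-y\|_2^2 = \sum_{i=1}^d (x_i-y_i)^2$. The natural first attempt is to embed each coordinate separately: replace coordinate $i$ by the function $t \mapsto (t - a)^2$ for suitable anchor points $a$. The key obstacle, which I expect to be the main difficulty, is that $t \mapsto (t-a)^2$ is not an isometry (or even a bi-Lipschitz map) from $\bbR$ with the Euclidean metric into $\bbR$ with the $\ell_1$ metric on the whole line — it is only well-behaved on bounded intervals, and the distortion blows up as the interval grows. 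This is exactly why we only ask for a \emph{terminal} embedding: we only need the comparison $\|\varphi(x)-\varphi(y)\|_1 \leq \|x-y\|_2^2 \leq 8|K| \cdot \|\varphi(x)-\varphi(y)\|_1$ to hold when $y \in K$, so we may calibrate the embedding using the (finite) set of terminal coordinate values.

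Here is the construction I would carry out. Fix a coordinate $i$. Let $K_i = \{y_i : y \in K\}$ be the set of $i$-th coordinates of terminals; write these sorted as $a_1 < a_2 < \dots < a_m$ (with $m \leq |K|$). For each point $x$, I would define the $i$-th block of $\varphi(x)$ to record, for each terminal value $a_j$, the signed quantity $(x_i - a_j)|x_i - a_j|$ or alternatively the pair $((x_i-a_j)_+^2, (x_i-a_j)_-^2)$ (using positive and negative parts), so that for any terminal $y$ with $y_i = a_j$ we recover $|(\varphi(x))\text{-block} - (\varphi(y))\text{-block}|_1 = (x_i - a_j)^2 = (x_i - y_i)^2$ exactly. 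Summing over coordinates $i$ and over the blocks gives $\|\varphi(x) - \varphi(y)\|_1 \geq (x_i-y_i)^2$ for the relevant block, but since $\varphi(x)$ has one block per terminal value in \emph{each} coordinate, the full $\ell_1$ norm $\|\varphi(x)-\varphi(y)\|_1$ will overcount by a factor of up to $|K|$ — each coordinate difference $(x_i - y_i)^2$ gets compared against roughly $|K|$ anchor points rather than just $y_i$. This overcounting is precisely what produces the distortion factor $\Theta(|K|)$: on one side $\|\varphi(x)-\varphi(y)\|_1 \geq \|x-y\|_2^2$ because the "correct" block alone already contributes $\|x-y\|_2^2$, and on the other side one bounds each spurious block's contribution by a constant times the correct one (this is where a triangle-inequality / AM-GM type estimate $|(x_i-a)^2 - (y_i-a)^2| \leq \text{(const)} \cdot ((x_i-y_i)^2 + (\text{something controlled by } y \in K))$ enters), giving $\|\varphi(x)-\varphi(y)\|_1 \leq 8|K| \cdot \|x-y\|_2^2$ after rescaling $\varphi$ by $1/(8|K|)$.

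The cut-preserving property should be essentially free from this construction: each output coordinate of $\varphi$ is a monotone function of a single input coordinate $x_i$ (the maps $t \mapsto (t-a_j)|t-a_j|$ and $t \mapsto (t-a_j)_\pm^2$ are monotone on each of the half-lines $t \leq a_j$ and $t \geq a_j$, and globally monotone in the signed-square version). A threshold cut $\{x : \varphi(x)_i \leq \theta\}$ in the image therefore pulls back to a threshold cut $\{x : x_{i'} \leq \theta'\}$ in the domain, and conversely; I would verify this by writing down, for each of the finitely many output coordinates, which input coordinate it depends on and checking monotonicity, then inverting. The steps in order: (1) set up the per-coordinate anchor sets $K_i$ and define $\varphi$ block by block; (2) check exact recovery $\|\varphi(x)-\varphi(y)\|_1 \geq \|x-y\|_2^2$ from the single correct block when $y \in K$ (after the $1/(8|K|)$ scaling this becomes the lower inequality $\|\varphi(x)-\varphi(y)\|_1 \leq \|x-y\|_2^2$ — I would fix the direction of scaling carefully); (3) bound each of the $O(|K|)$ blocks per coordinate by a constant multiple of $(x_i-y_i)^2$ using a triangle-type inequality valid because $a_j, y_i \in K_i$, and sum to get the factor $8|K|$; (4) confirm each output coordinate is monotone in one input coordinate, giving the cut-preserving property by inversion.
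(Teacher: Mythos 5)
There is a genuine gap in step (3), and it is exactly at the point you flagged as the main difficulty. Your construction attaches, for each coordinate $i$ and each anchor $a\in K_i$, a block built from $t\mapsto (t-a)\lvert t-a\rvert$ (or the positive/negative squared parts), and you claim each spurious block's contribution is at most a constant times the correct one. That bound is false: for an anchor $a$ far from $y_i$ we have
$\bigl\lvert (x_i-a)^2-(y_i-a)^2\bigr\rvert=\lvert x_i-y_i\rvert\cdot\lvert x_i+y_i-2a\rvert$,
and the second factor is governed by the distance from $a$ to $y_i$, not by $\lvert x_i-y_i\rvert$. Concretely, take $K_i=\{0,M\}$, $y_i=0$, $x_i=1$: the correct block (anchor $0$) contributes $1=(x_i-y_i)^2$, but the spurious block (anchor $M$) contributes $\lvert M^2-(M-1)^2\rvert=2M-1$, which is unbounded relative to $(x_i-y_i)^2$. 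The AM--GM split you suggest leaves a residual term proportional to $\lvert y_i-a\rvert^2$, a fixed positive quantity that cannot be absorbed into $\|x-y\|_2^2$ as $x\to y$. So after your $1/(8|K|)$ rescaling the required inequality $\|\varphi(x)-\varphi(y)\|_1\le\|x-y\|_2^2$ fails, and no rescaling fixes it because the ratio is unbounded.

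The paper avoids this by using a \emph{single} monotone function $\psi_{K_i}$ per coordinate rather than one block per anchor: it places the terminal values $y_1<\dots<y_k$ at images $z_1=0$, $z_j=\frac12\sum_{m<j}(y_{m+1}-y_m)^2$, and between terminals defines $\psi(x)=z_{j^*}+\varepsilon_x (x-y^{*})^2$ where $y^{*}$ is the \emph{nearest} terminal value. Because $\psi'(x)=2\lvert x-y^{*}\rvert\le 2\lvert x-y\rvert$ for every terminal $y$, the upper bound $\lvert\psi(x)-\psi(y)\rvert\le\lvert x-y\rvert^2$ holds with no loss at all; the factor $8k$ appears only in the reverse direction, via the AM--QM inequality $\sum_m(y_{m+1}-y_m)^2\ge (y_j-y_i)^2/(j-i)$ applied to the telescoping increments. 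Your instinct that the per-anchor quadratic is only well-behaved locally is correct; the fix is to make the map quadratic only around the nearest terminal and merely additive (with controlled increments) across the gaps, not to keep all anchors simultaneously. Your cut-preservation argument (monotonicity in a single input coordinate) is fine and matches the paper's.
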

\begin{proof}
We first prove a one dimensional analog of this theorem (which corresponds to the case when all points and centers are in one dimensional space).
\begin{lemma}\label{lem:terminal-embedding-1d}
For every finite set of real numbers $K$, there exists a cut preserving embedding 
$\psi_K:\bbR\to\bbR$ such that for every $x\in \bbR$ and $y\in K$, we have
\begin{equation}\label{ineq:psi-K}
|\psi_K(x)- \psi_K(y)| \leq
|x-y|^2
 \leq 8|K|\cdot|\psi_K(x)- \psi_K(y)|. 
\end{equation}
\end{lemma}

\begin{proof}
Let $k$ be the size of $K$ and $y_1,\dots,y_k$ be the elements of $K$ sorted in increasing order. We first define $\psi_K$ on points in $K$ and then extend this map to the entire real line $\bbR$. We map each $y_i$ to $z_i$ defined as follows:
$z_1 = 0$ and for $i=2,\dots, k$,
$$z_i = \frac{1}{2}\sum_{j=1}^{i-1} (y_{j+1}-y_j)^2.$$

Now consider an arbitrary number 
$x$ in $\bbR$. Let $y_i$ be the closest point to $x$ in $K$. Let 
$\varepsilon_x = \sign(x- y_i)$. 
Then, $x = y_i + \varepsilon_x |x-y_i|$. Note that $\varepsilon_x = 1$ if $x$ is on the right to $y_i$, and $\varepsilon_x = -1$, otherwise. Let the function $\psi_K$ be
$$\psi_K(x) = z_i + \varepsilon_x (x-y_i)^2.$$
For $x = (y_i+y_{i+1})/2$, both $y_i$ and $y_{i+1}$ are the closest points to $x$ in $K$. In this case, we have 
$$z_i + \varepsilon_x (x-y_i)^2 = z_{i+1} + \varepsilon_x (x-y_{i+1})^2,$$
which means $\psi_K(x)$ is well-defined. 

An example of the terminal embedding function $\psi_K(x)$ is shown in Figure~\ref{fig:example}.  
Then, we show that this function $\psi_K$ is a cut preserving embedding satisfying inequality~(\ref{ineq:psi-K}). 

We first show that this function $\psi_K$ is continuous and differentiable in $\bbR$. Consider $2k$ open intervals on the real line divided by points in $K$ and points $(y_i+y_{i+1})/2$ for $i\in \{1,2,\cdots,k-1\}$. In every such open interval, the function $\psi_K$ is a quadratic function, which is continuous and differentiable. Since $\psi_K$ is also continuous and differentiable at the endpoints of these intervals, the function $\psi_K$ is continuous and differentiable in $\real$. For any $x \in \bbR$, we have $\psi_K'(x) = 2\abs{x-y^*} \geq 0$ where $y^*$ is the closest point in $K$ to $x$. Thus, the function $\psi_K$ is increasing in $\bbR$, which implies $\psi_K$ is cut preserving.  

We now prove that $\psi_K$ satisfies two inequalities. We first show that for every $x\in \bbR$ and $y \in K$, 
$|\psi_K(x)-\psi_K(y)|\leq 
|x-y|^2$. Suppose that $x\geq y$ (The case $x\leq y$ is handled similarly.) If $x=y$, then this inequality clearly holds. Thus, to prove $|\psi_K(x)-\psi_K(y)|\leq |x-y|^2$, it is 
sufficient to prove the following inequality on derivatives 
$$(\psi_K(x)-\psi_K(y))'_x\leq \big((x-y)^2\big)'_x.$$
Let $y^*$ be the closest point in $K$ to $x$. Then,
$$
(\psi_K(x)-\psi_K(y))'_x = (\psi_K(x))'_x 
= (\psi_K(y^*) + \varepsilon_x(x-y^*)^2)'_x 
= 2|x-y^*|.
$$
Since $y^*$ is the closest point in $K$ to $x$, we have 
$|x-y^*|\leq |x-y| = \big((x-y)^2\big)'_x /2$.
This finishes the proof of the first inequality.

\medskip

We now verify the second inequality. First, consider 
two points $y_i$ and $y_j$ ($y_i < y_j$). Write,
$$\psi_K(y_j) - \psi_K(y_i) = z_j - z_i =
\frac{1}{2} \sum_{m=i}^{j-1} (y_{m+1}-y_m)^2.$$
By the arithmetic mean--quadratic mean inequality, we have
$$
(j-i)\cdot \sum_{m=i}^{j-1} (y_{m+1}-y_m)^2 \geq
\Big(\sum_{m=i}^{j-1} y_{m+1}-y_m\Big)^2 = 
(y_j - y_i)^2.
$$
Thus,
$$\psi_K(y_j) - \psi_K(y_i)\geq \frac{(y_j-y_i)^2}{2(j-i)}\geq 
\frac{(y_j-y_i)^2}{2(k-1)}.$$

Now we consider the case when $x$ is an arbitrary real number in $\bbR$ and $y\in K$. Let $y^*$ be the closest point in $K$ to $x$. Then,
$$|x-y|^2 \leq 2|x-y^*|^2 + 2|y^*-y|^2.$$

The first term on the right hand side equals
$4|\psi_K(x)-\psi_K(y^*)|$; the second term is upper bounded by $4(k-1) |\psi_K(y)-\psi_K(y^*)|$. Thus,
$$|x-y|^2 \leq 4|\psi_K(x)-\psi_K(y^*)| +
4(k-1) |\psi_K(y^*)-\psi_K(y)|.$$
Note that $|\psi_K(x)-\psi_K(y^*)|\leq |\psi_K(x)-\psi_K(y)|$ since $y^*$ is the closest point in $K$ to $x$. Also, we have
$$
|\psi_K(y^*)-\psi_K(y)|\leq |\psi_K(x)-\psi_K(y^*)| + |\psi_K(x)-\psi_K(y)|\leq 2|\psi_K(x)-\psi_K(y)|.
$$
Hence,
$$|x-y|^2 \leq 8k|\psi_K(x)-\psi_K(y)|.$$
This completes the proof.
\end{proof}


\begin{figure}
    \centering
        \begin{tikzpicture}
\begin{axis}[
  axis x line=center,
  axis y line=center,
  xtick={-1,0,...,5,6},
  ytick={-1,0,...,5,6},
  xlabel={$x$},
  ylabel={$z$},
  xlabel style={below right},
  ylabel style={above left},
  xmin=-1.5,
  xmax=7,
  ymin=-1.5,
  ymax=6.5]
\addplot [mark=none,domain=0:1] {-x*x+2*x-1};
\addplot [mark=none,domain=1:2] {x*x-2*x+1};
\addplot [mark=none,domain=2:3] {-x*x+6*x-7};
\addplot [mark=none,domain=3:4] {x*x-6*x+11};
\addplot [mark=none,domain=4:5] {-x*x+10*x-21};
\addplot [mark=none,domain=5:6] {x*x-10*x+29};
\node[label={{$(y_1,z_1)$}},circle,fill,inner sep=2pt] at (axis cs:1,0) {};
\node[label={{$(y_2,z_2)$}},circle,fill,inner sep=2pt] at (axis cs:3,2) {};
\node[label={{$(y_3,z_3)$}},circle,fill,inner sep=2pt] at (axis cs:5,4) {};
\node[label={{$\psi_K(x)$}}] at (axis cs:6,5) {};
\end{axis}
\end{tikzpicture}
    \caption{Terminal embedding function $\psi_K(x)$ for $K= \{1,3,5\}$.}
    \label{fig:example}
\end{figure}

Using the above lemma, we can construct a terminal embedding  $\psi$ from $d$-dimensional $\ell_2^2$ into $d$-dimensional $\ell_1$ as follows. For each coordinate $i\in\{1,2,\cdots,d\}$, let $K_i$ be the set of the $i$-th coordinates for all terminals in $K$. Define one dimensional terminal embeddings $\psi_{i}$ for all coordinates $i$. Then, $\psi$ maps every point $x\in \ell_2^2$ to 
$\psi(x) = (\psi_1(x),\cdots, \psi_d(x))$. 

We show that this terminal embedding $\psi$ is coordinate cut preserving. 
By the construction of $\varphi$, we have for any threshold cut $(i,\theta)$
$$
\{x\in \bbR^d: \psi(x)_i \leq \theta\} = \{x\in \bbR^d: \psi_i(x_i) \leq \theta\}.
$$
Since $\psi_{i}$ is a cut preserving terminal embedding by Lemma~\ref{lem:terminal-embedding-1d}, there exists a threshold $\theta' \in \bbR$ such that
$$
\{x\in \bbR^d: x_i \leq \theta'\} = \{x\in \bbR^d: \psi_{i}(x_i) \leq \theta\},
$$
which implies $\psi$ is coordinate cut preserving.
\end{proof}

For explainable $k$-means clustering, we first use the terminal embedding of $\ell_2^2$ into $\ell_1$. Then, we apply Algorithm~\ref{alg:threshold_tree_kmedians} to the instance after the embedding. By using this terminal embedding, we can get the following result.

\begin{theorem}\label{thm:functional-L2-kmeans}
Given a set of points $X$ in $\bbR^d$ and a set of centers $C$ in $\bbR^d$, Algorithm~\ref{alg:threshold_tree_kmedians} with terminal embedding finds a threshold tree $T$ with expected $k$-means cost at most
$$
\E[\cost_{\ell_2^2}(X,T)] \leq O(k\log k\log\log k)\cdot \cost_{\ell_2^2}(X,C).
$$
\end{theorem}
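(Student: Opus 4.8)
The plan is to reduce $k$-means to $k$-medians in $\ell_1$ by means of the coordinate cut preserving terminal embedding $\varphi=\psi$ of Lemma~\ref{lem:embedding}, taking the set of terminals to be the reference centers $C$. First I would form the embedded instance $\varphi(X)$ with centers $\varphi(C)=\{\varphi(c^1),\dots,\varphi(c^k)\}$ and run Algorithm~\ref{alg:threshold_tree_kmedians} on it, obtaining a (random) threshold tree $T'$ in the coordinate space of $\varphi$. Since $\varphi$ is coordinate cut preserving — and, as the proof of Lemma~\ref{lem:embedding} shows, each cut $(i,\theta)$ in $\varphi$-space corresponds to a cut $(i,\theta')$ in the original space \emph{with the same coordinate} $i$ — replacing every internal cut of $T'$ by its original-space counterpart yields a threshold tree $T$ on $X$ that induces exactly the same partition $P_1,\dots,P_k$ and the same assignment of one reference center per leaf. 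Let $c(x)$ denote the reference center in the leaf of $T$ (equivalently of $T'$) containing $x$.

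Next I would bound the cost. Since the mean of a cluster minimizes the sum of squared Euclidean distances, $\cost_{\ell_2^2}(X,T)\le\sum_{x\in X}\|x-c(x)\|_2^2$. Applying the right inequality of the terminal embedding (valid because $c(x)$ is a terminal), $\|x-c(x)\|_2^2\le 8k\,\|\varphi(x)-\varphi(c(x))\|_1=8k\cdot\alg_{\ell_1}(\varphi(x))$, where $\alg_{\ell_1}(\varphi(x))$ denotes the cost of $\varphi(x)$ in $T'$ measured against the reference centers $\varphi(C)$; the last equality uses that $\varphi(x)$ and $\varphi(c(x))$ lie in the same leaf of $T'$. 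Taking expectations and invoking the guarantee established in the proof of Theorem~\ref{thm:improved} (which in fact controls $\E[\sum_x\alg_{\ell_1}(\varphi(x))]$, not merely $\E[\cost_{\ell_1}(\varphi(X),T')]$),
\[
\E[\cost_{\ell_2^2}(X,T)]\le 8k\cdot\E\Big[\sum_{x}\alg_{\ell_1}(\varphi(x))\Big]\le 8k\cdot O(\log k\log\log k)\cdot\cost_{\ell_1}(\varphi(X),\varphi(C)).
\]
Finally, the left inequality of the embedding gives $\min_{c\in C}\|\varphi(x)-\varphi(c)\|_1\le\min_{c\in C}\|x-c\|_2^2$ for every $x$, hence $\cost_{\ell_1}(\varphi(X),\varphi(C))\le\cost_{\ell_2^2}(X,C)$; combining the two displays yields $\E[\cost_{\ell_2^2}(X,T)]\le O(k\log k\log\log k)\cdot\cost_{\ell_2^2}(X,C)$.

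The step I expect to be the main obstacle is the bookkeeping about which ``cost'' of the embedded instance is being controlled: Theorem~\ref{thm:improved} is stated for $\cost_{\ell_1}(\cdot,T')$ (using optimal $\ell_1$-medians of the leaves), whereas the reduction needs an upper bound on $\sum_x\|\varphi(x)-\varphi(c(x))\|_1$, i.e.\ the cost measured against the \emph{reference} centers $\varphi(c^i)$. One must therefore appeal to the fact that the proofs of Theorems~\ref{thm:functional-L1-kmedians} and~\ref{thm:improved} actually bound $\E[\alg_{\ell_1}(x)]$ pointwise (and hence $\E[\sum_x\alg_{\ell_1}(x)]$), which is stronger than the displayed aggregate statement. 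The remaining ingredients — the two-sided use of the distortion inequality (upper side to charge the $k$-means cost of $T$, lower side to bound the embedded optimum against $\cost_{\ell_2^2}(X,C)$), and the verification that the cut-preserving map transports $T'$ to a legitimate threshold tree over the original features without changing the partition — are routine consequences of Lemma~\ref{lem:embedding}.
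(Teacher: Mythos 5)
Your proposal is correct and follows essentially the same route as the paper: embed via the cut-preserving terminal embedding of Lemma~\ref{lem:embedding}, run Algorithm~\ref{alg:threshold_tree_kmedians} on the embedded instance, pull the tree back by cut preservation, and chain the two sides of the distortion inequality around the (pointwise) $O(\log k\log\log k)$ guarantee of Theorem~\ref{thm:improved}. Your extra care about which cost is controlled --- the per-point cost against the reference center in each leaf rather than $\cost_{\ell_1}(\cdot,T')$ with re-optimized medians --- is a legitimate bookkeeping point that the paper's terse proof glosses over, and your resolution (the proofs of Theorems~\ref{thm:functional-L1-kmedians} and~\ref{thm:improved} bound $\E[\alg_{\ell_1}(x)]$ pointwise) is exactly right.
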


\begin{proof}
Let $\varphi$ be the terminal embedding of $\ell_2^2$ into $\ell_1$ with terminals $C$. Let $T'$ be the threshold tree returned by our algorithm on the instance after embedding. Since the terminal embedding $\varphi$ is coordinate cut preserving, the threshold tree $T'$ also provides a threshold tree $T$ on the original $k$-means instance. Let $\varphi(C)$ be the set of centers after embedding. For any point $x\in X$, the expected cost of $x$ is at most
\begin{align*}
\E[\mathrm{cost}_{\ell_2^2}(x,T)] &\leq 8k \cdot \E[\mathrm{cost}_{\ell_1}(\varphi(x),T')] \\
&\leq O(k\log k \log\log k)\cdot \mathrm{cost}_{\ell_1}(\varphi(x),\varphi(C)) \\
&\leq O(k\log k \log \log k)\cdot \mathrm{cost}_{\ell_2^2}(x,C),
\end{align*}
where the first and third inequality is from the terminal embedding in Lemma~\ref{lem:embedding} and the second inequality is due to Theorem~\ref{thm:improved}.
\end{proof}
\section{\texorpdfstring{$k$-medians in $\ell_2$}{{k-medians in l2}}}\label{sec:L2-k-medians}

In this section, we present an algorithm for the $k$-medians in $\ell_2$ and show that it provides an explainable clustering with cost at most $O(\log^{3/2} k)$ times the original cost.

\subsection{\texorpdfstring{Algorithm for $k$-medians in $\ell_2$}{Algorithm for k-medians in l2}}
Our algorithm builds a binary threshold tree $T$ using a top-down approach, as shown in Algorithm~\ref{alg:threshold_tree_kmedians-L_2}. It starts with a tree containing only the root node $r$. The root $r$ is assigned the set of points $X_r$ that contains all points in the data set $X$ and all reference centers $c^i$. Then, the algorithm calls function \textsc{Build\_tree($r$)}. Function \textsc{Build\_tree($u$)} partitions centers in $u$ in several groups $X_v$ using function
\textsc{Partition\_Leaf($u$)} and then recursively calls itself (\textsc{Build\_tree($v$)}) for every new group $X_v$ that contains more than one reference center $c^i$.

Most work is done in the function \textsc{Partition\_Leaf($u$)}. The argument of the function is a leaf node $u$ of the tree. We denote the set of data points and centers assigned to $u$ by $X_u$. Function \textsc{Partition\_Leaf($u$)} partitions the set of centers assigned to node $u$ into several groups. Each group contains at most half of all centers $c^i$ from the set $X_u$. When  \textsc{Partition\_Leaf($u$)} is called, the algorithm finds the $\ell_1$-median of all reference centers in node $u$. Denote this point by $m^u$. We remind the reader that the $i$-th coordinate of the median $m^u$ (which we denote by $m^u_i$) is a median for $i$-th coordinates of centers in $X_u$. That is, for each coordinate $i$, both sets $\{c\in X_u\cap C: c_i < m^u_i\}$
and $\{c\in X_u\cap C: c_i > m^u_i\}$ contain at most half of all centers in $X_u$.
Then, function \textsc{Partition\_Leaf($u$)} iteratively partitions 
$X_u$ into pieces until each piece contains at most half of all centers from $X_u$.
We call the piece that contains the median $m^u$ the main part (note that we find the median $m^u$ when \textsc{Partition\_Leaf($u$)} is called and do not update $m^u$ afterwards). 

At every iteration $t$, the algorithm finds the maximum distance $R_t^u$ from centers in the main part to the point $m^u$. The algorithm picks a random coordinate $i_t^u\in \{1,2,\cdots,d\}$, random number $\theta_t^u \in [0, (R_t^u)^2]$, and random sign $\sigma_t^u\in \{\pm 1\}$ uniformly. Then, it splits the main part using the threshold cut $(i_t^u,m^u_i + \sigma_t^u\sqrt{\theta_t^u})$ if this cut separates at least two centers in the main part. Function \textsc{Partition\_Leaf($u$)} stops, when the main part contain at most half of all centers in $X_u$. Note that all pieces separated from $m^u$ during the execution of \textsc{Partition\_Leaf($u$)} contain at most half of all centers in $X_u$ because $m^u$ is the median of all centers in $X_u$.

\begin{figure}[tb]
\begin{algorithmic}
\begin{framed}
\STATE {\bfseries Input:} a data set $X \subset \bbR^d$, centers 
$C=\{c_1,c_2,\dots, c_k\} \subset \bbR^d$
\STATE {\bfseries Output:} a threshold tree $T$
\STATE{}
\FUNCTION{\textsc{Main}$(X,C)$}
\STATE Create a root $r$ of the threshold tree $T$ containing $X_r = X\cup C$.
\STATE \textsc{Build\_tree}($r$).
\ENDFUNCTION 
\STATE{}
\FUNCTION{\textsc{Partition\_Leaf}($u$)}{
\STATE Compute the $\ell_1$ median $m^u$ of all centers in $X_u$.
\STATE Set the main part $u_0 = u$ and set $t=0$.
\WHILE{node $u_0$ contains more than $1/2$ of centers in $X_u$}
\STATE Update $t=t+1$.
\STATE Let $R_t^u = \max_{c \in X_{u_0}} \norm{c}_2$.
\STATE Sample $i_t^u \in \{1,2,\cdots,d\}$, $\theta_t^u \in [0,(R_t^u)^2]$, and $\sigma_t^u \in \{\pm 1\}$ uniformly at random.
\IF{two centers in $X_{u_0}$ are separated by $(i_t^u,m^u_i+\sigma_t^u\sqrt{\theta_t^u})$}{
\STATE Assign to $u_0$ two children 
$u_{\leq} = \{x\in X_{u_0}: x_i \leq \vartheta\}$ and 
$u_{>} = \{x\in X_{u_0}: x_i > \vartheta\}$ where $i = i_t^u,\vartheta = m^u_i+\sigma_t^u\theta_t^u$.
\STATE Update the main part $u_0$ be $u_{\leq}$ if $\sigma_t^u = 1$, and be $u_{>}$ otherwise (thus, the main part always contains $m^u$).
}\ENDIF
\ENDWHILE
}\ENDFUNCTION

\STATE{}
\FUNCTION{\textsc{Build\_tree}($u$)}{
\STATE Call \textsc{Partition\_Leaf}($u$).
\STATE Call \textsc{Build\_tree}($v$) for each leaf $v$ in the subtree of $u$ containing more than one center.
}\ENDFUNCTION 
\end{framed}
\end{algorithmic}

\caption{Threshold tree construction for $k$-medians in $L_2$}
\label{alg:threshold_tree_kmedians-L_2}
\end{figure}

\begin{theorem}\label{thm:k-medians-l2-first}
Given a set of points $X$ in $\bbR^d$ and a set of centers $C = \{c^1,\dots,c^k\}\subset \bbR^d$, Algorithm~\ref{alg:threshold_tree_kmedians-L_2} finds a threshold tree $T$ with expected $k$-medians in $\ell_2$ cost at most
$$
\E[\cost_{\ell_2}(X,T)] \leq O(\log^{\nicefrac{3}{2}} k)\cdot \cost_{\ell_2}(X,C).
$$
\end{theorem}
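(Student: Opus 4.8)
The plan is to reduce to a single data point and then run an induction on the number of centers, mirroring the per‑point analysis used for $k$-medians in $\ell_1$ (Theorems~\ref{thm:functional-L1-kmedians} and~\ref{thm:improved}). Fix $x\in X$ and let $c$ be a closest center to $x$ in $C$, so $\cost_{\ell_2}(x,C)=\|x-c\|_2$; it suffices to prove $\E[\cost_{\ell_2}(x,T)]\le O(\log^{\nicefrac{3}{2}}k)\,\|x-c\|_2$. As $x$ descends the tree, track its \emph{reference center}: initially $c$, updated only when the current reference center lands in a different child of a node than $x$ during some call of \textsc{Partition\_Leaf}. While the reference center stays with $x$, the cost of $x$ is at most the distance to it, and this distance cannot increase within a single \textsc{Partition\_Leaf} call in which the reference center is not separated from $x$ (it stays in the main part, or in the same side piece as $x$, so the nearest center to $x$ inside $x$'s current piece only gets closer). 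Hence the cost of $x$ can grow only in the \emph{single} \textsc{Partition\_Leaf}$(u)$ call where the reference center first parts from $x$. Since $m^u$ is the $\ell_1$-median of $X_u$, every side piece (for either sign $\sigma$) and the final main part contains at most half the centers of $X_u$; so whichever child $v$ of $u$ the point $x$ ends up in satisfies $|X_v\cap C|\le |X_u\cap C|/2$, and from $v$ onward we apply the induction hypothesis with a fresh reference center $c'\in X_v$.

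This reduces the theorem to the following single‑call estimate (with a trivial base case for $|X_u\cap C|\le 3$).

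\medskip
\noindent\emph{Key Lemma.} In a call \textsc{Partition\_Leaf}$(u)$ with $n=|X_u\cap C|\ge 4$ centers and median $m=m^u$, for any point $x$ with reference center $c\in X_u$, the expected increase of the cost of $x$ caused by this call satisfies
$$
\E\bigl[\|x-c'\|_2-\|x-c\|_2\bigr]\ \le\ \Bigl(\bigl(\tfrac{\log_2 n}{\log_2 n-1}\bigr)^{\nicefrac{3}{2}}-1\Bigr)\,\|x-c\|_2\ =\ \tfrac{3+o(1)}{2\log_2 n}\cdot\|x-c\|_2,
$$
where $c'$ is the closest center to $x$ in the child $v$ of $u$ that ends up containing $x$, and the increase is taken to be $0$ unless $c$ is separated from $x$ during the call.

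\medskip
\noindent Granting this, let $F(n)$ be the best competitive ratio over instances with $n$ centers; the Key Lemma plus monotonicity of $F$ gives $F(n)\le F(n/2)\cdot\bigl(\tfrac{\log_2 n}{\log_2 n-1}\bigr)^{\nicefrac{3}{2}}$, which telescopes (using $\prod_{\ell=2}^{\log_2 k}(\tfrac{\ell}{\ell-1})^{\nicefrac{3}{2}}=(\log_2 k)^{\nicefrac{3}{2}}$) to $F(k)=O(\log^{\nicefrac{3}{2}}k)$. Note that the exponent $\nicefrac{3}{2}$ is produced \emph{entirely} by the constant inside the Key Lemma, so that constant, not just its order, has to be controlled.

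Proving the Key Lemma is the real work. Fix the call and index the while‑loop iterations by $t=1,2,\dots$; let $M_t$ be the center set of the main part at iteration $t$ and $R_t=\max_{c''\in M_t}\|c''-m\|_2$, a non‑increasing quantity. The reference center parts from $x$ in one of two ways: ($\alpha$) a cut at some iteration $t_0$ sends $x$ into a side piece while $c$ is elsewhere; ($\beta$) a cut sends $c$ (and possibly the successive later references of $x$) into side pieces while $x$ stays in the main part. In case ($\alpha$) the new reference lies in a side piece carved from $M_{t_0}$, so $\|x-c'\|_2\le\|x-m\|_2+R_{t_0}=O(R_{t_0})$; in case ($\beta$) the final reference is a closest center to $x$ in a later main part, again $O(R_t)$ away. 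Thus the penalty at a separating iteration $t$ is $O(R_t)$. The saving comes from the probability estimate enabled by the squared‑distance scaling: since the threshold is $m_i\pm\sqrt{\theta}$ with $\theta$ uniform on $[0,R_t^2]$, a uniformly sampled cut at iteration $t$ separates $x$ from $c$ with probability, conditioned on being a ``successful'' cut (one splitting $\ge 2$ centers of $M_t$), bounded by a quantity proportional to $\sum_{i=1}^{d}\bigl|(x_i-m_i)^2-(c_i-m_i)^2\bigr|$ over the measure of successful cuts; by $|a^2-b^2|=|a-b|\,|a+b|$ coordinatewise and Cauchy–Schwarz, $\sum_i|(x_i-m_i)^2-(c_i-m_i)^2|\le\|x-c\|_2\,\|x+c-2m\|_2\le\|x-c\|_2(\|x-m\|_2+\|c-m\|_2)=O(\|x-c\|_2\,R_t)$. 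As in the proof of Lemma~\ref{lem:weight}, I would then split the iterations into $O(\log n)$ phases by the dyadic scale of $R_t$, bound each phase's contribution to the expected penalty — using that the separation of $x$ from $c$ is a \emph{one‑time} event, so the relevant sums of probabilities telescope via a Doob / optional‑stopping argument exactly as in Claim~\ref{claim:sum-probs} — and add up; combining the $O(R_t)$ penalty, the probability estimate, and the $O(\log n)$ phases is what should yield the $O(1/\log n)$ bound of the Key Lemma.

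The main obstacle, and the place where the $\ell_2$ analysis genuinely departs from the $\ell_1$ one, is case ($\beta$): when $c$ is separated from $x$ toward a side piece while $x$ stays with the median, the penalty is of order $R_t$, which can be arbitrarily larger than $\|x-c\|_2$, so the whole bound must come from showing the probability of this is correspondingly tiny — of order $\|x-c\|_2^2/R_t^2$ at the scales where the event is cheap, exploiting that $c$ is already close to $m$ relative to $R_t$ there — \emph{and} from checking that the uniform choice of the splitting coordinate among all $d$ coordinates, combined with the fact that many sampled cuts are unsuccessful and simply retried, does not leak a $\poly(d)$ factor into the bound. Nailing down both the exponent constant $\nicefrac{3}{2}$ in the Key Lemma and this dimension‑independence is the delicate part; the rest follows the template of the $k$-medians in $\ell_1$ analysis.
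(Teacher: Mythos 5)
Your high-level reduction (per-point analysis, recursing on the number of centers, the Cauchy--Schwarz estimate $\sum_i|(x_i-m_i)^2-(c_i-m_i)^2|\le\|x-c\|_2(\|x-m\|_2+\|c-m\|_2)$, and dyadic phases in $R_t^u$) matches the paper's machinery, but your Key Lemma is quantitatively false, and the multiplicative recursion built on it cannot work. The expected cost increase of $x$ in a single call of \textsc{Partition\_Leaf}$(u)$ is $\Omega(\|x-c\|_2)$ in the worst case, not $O(\|x-c\|_2/\log n)$. Concretely, put $m^u$ at the origin, $c=(R,0,\dots,0)$, $x=(R-\varepsilon,0,\dots,0)$ with $\varepsilon=\|x-c\|_2$ small, and all other centers at distance at least $R/2$ from $x$. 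At each step, conditioned on the cut hitting coordinate $1$ with sign $+1$, either $\theta<(R-\varepsilon)^2$ and both $x$ and $c$ exit to a side piece together (no increase, and the call is over for $x$), or $\theta\in[(R-\varepsilon)^2,R^2)$ and $c$ alone is cut away from $x$; the second event wins this race with probability about $2\varepsilon/R$, and when it does the nearest remaining center to $x$ is at distance at least $R/2$, so the expected increase is $\Omega(\varepsilon)=\Omega(\|x-c\|_2)$ --- a factor $\Theta(\log n)$ larger than your Key Lemma allows. With the true per-call factor $1+\Omega(1)$, your recursion $F(n)\le F(n/2)\cdot(1+\Omega(1))$ telescopes to $k^{\Omega(1)}$, not $\mathrm{polylog}(k)$; the exponent $\nicefrac{3}{2}$ cannot be produced this way.

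The paper avoids this by keeping the accounting \emph{additive} and anchored to the original center. Separation of $x$ from its original closest center $c$ is a one-time event; if it happens at step $t$ of the call at node $u$, then every center that can ever serve $x$ afterwards lies in the main part at that moment, hence within $R_t^u$ of $m^u$, so the \emph{final} cost of $x$ is at most $\|x-c\|_2+2R_t^u$ no matter what later cuts do. One therefore only needs the bound $\E[\sum_t\phi_t^u(x)]\le O(\sqrt{\log k})\cdot\cost_{\ell_2}(x,C)$ for each of the $O(\log k)$ recursion levels (Lemma~\ref{lem:penalty-L_2}), and these add up to $O(\log^{\nicefrac{3}{2}}k)$. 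The $O(\sqrt{\log k})$ per level comes from exactly the light/heavy split you gesture at but do not carry out: for heavy cuts ($R_t^u>\sqrt{\log_2 k}\cdot\max\{\|x-m^u\|_2,\|c-m^u\|_2\}$) the dyadic sum over scales is dominated by the cutoff scale, and the $1/d$ coordinate-sampling probability cancels against the $O(d\log k)$ expected phase length (this disposes of your worry about a $\poly(d)$ leak); for light cuts the penalty is at most $\sqrt{\log_2 k}\cdot\max\{\|x-m^u\|_2,\|c-m^u\|_2\}$ while the probability that $x$ and $c$ are separated before either leaves the main part is at most $2\|x-c\|_2/\max\{\|x-m^u\|_2,\|c-m^u\|_2\}$. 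So the correct target is an additive per-call penalty of $O(\sqrt{\log k})\,\cost_{\ell_2}(x,C)$ measured against the original cost, summed over $O(\log k)$ levels --- not a $(1+O(1/\log n))$ multiplicative contraction per level.
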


\begin{proof}
Let $T_t(u)$ be the threshold tree at the beginning of iteration $t$ in function \textsc{Partition\_Leaf($u$)}.
For every point $x \in X_u$, define its cost at step $t$ of function \textsc{Partition\_Leaf($u$)} to be the distance from $x$ to the closest center in 
the same leaf of $T_t(u)$ as $x$. That is, if $x$ belongs to a leaf node $v$ in the threshold tree $T_t(u)$, then
$$\cost_{\ell_2}(x,T_t(u))= \min\{\|x-c\|_2: c\in X_v \cap C\}.$$
If the point $x$ is separated from its original center in $C$ by the cut generated at time step $t$, then $x$ will be eventually assigned to
some other center in the  main part of $T_t(u)$. 
By the triangle inequality, the new cost of $x$ at the end of the algorithm will be
at most $\cost_{\ell_2}(x,C)+2R_t^u$, where $R_t^u$ is the maximum radius of the main part in $T_t(u)$ i.e., 
$R_t^u$ is the distance from the median $m^u$ to the farthest center $c^i$ in the main part. 
Define a penalty function $\phi_t^u(x)$ as follows: $\phi_t^u(x) = 2R_t^u$ if $x$ is separated from its original center $c$ at time $t$; $\phi_t^u(x)=0$, otherwise. 
Let $U_x$ be the set of all nodes $u$ for which the algorithm calls \textsc{Build\_Tree($u$)} and $x \in X_u$. Note that some nodes $v$ of the threshold tree with 
$x\in X_v$ do not belong to $U_x$. Such nodes $v$ are created and split into two groups in the same call of \textsc{Partition\_Leaf($u$)}. 
Observe that $\phi_t^u(x) \neq 0$ for at most one step $t$ in the call of \textsc{Partition\_Leaf($u$)} for some node $u \in U_x$, and 
\begin{equation}\label{eq:l2-penalty}
\cost_{\ell_2}(x,T) \leq \cost_{\ell_2}(x,C) + \sum_{u\in U_x}\sum_{t} \phi_t^u(x).  
\end{equation}
The sum in the right hand side is over all iterations $t$ in all calls of function \textsc{Partition\_Leaf($u$)} with $u \in U_x$. Since each piece in the partition returned by function \textsc{Partition\_Leaf($u$)} contains at most half of all centers from $X_u$, the depth of the recursion tree is at most $O(\log k)$ (note that the depth of the threshold tree can be as large as $k-1$). This means that the size of $U_x$  is at most $O(\log k)$. In Lemma~\ref{lem:penalty-L_2}, we show that the expected total penalty in the call of \textsc{Partition\_Leaf($u$)} for every $u\in U_x$ is at most $O(\sqrt{\log k})$ times the original cost. Before that, we upper bound the expected penalty $\phi_t^u(x)$ for each step $t$ in the call of \textsc{Partition\_Leaf($u$)} for every node $u \in U_x$. 

\begin{lemma}\label{lem:penalty}
The expected penalty $\phi_t^u(x)$ is upper bounded as follows:
$$
\E[\phi_{t}^u(x)] \leq \E\bigg[2\norm{x-c}_2\cdot\frac{\norm{c - m^u}_2+\norm{x-m^u}_2}{d \cdot R_{t}^u}\bigg],
$$
where $c$ is the closest center to the point $x$ in $C$.
\end{lemma}

\begin{proof}
We first bound the probability that point $x$ is separated from its original center $c$ at iteration $t$. For any coordinate $i \in \{1,2,\cdots,d\}$, let $x_i$ and $c_i$ be the $i$-th coordinates of point $x$ and center $c$ respectively. For any point $x\in \bbR^d$, we define the indicator function  $\delta_x(i,\theta) = 0$ if $x_i \leq \theta$, and $\delta_x(i,\theta) = 1$ otherwise. 
To determine whether the threshold cut sampled at iteration $t$ separates $x$ and $c$, we consider the following two cases: (1) $x$ and $c$ are on the same side of the median $m^u$ in coordinate $i$ (i.e. $(x_i-m^u_i)(c_i - m^u_i) \geq 0$), and (2) $x$ and $c$ are on the opposite sides of the median $m^u$ in coordinate $i$ (i.e. $(x_i-m^u_i)(c_i - m^u_i) < 0$).

If $x$ and $c$ are on the same side of the median $m^u$ in coordinate $i$, then the threshold cut $(i,m^u_i+\sigma_t^u\sqrt{\theta_t^u})$ separates $x$ and $c$ if and only if $\sigma_t^u$ has the same sign as $x_i-m^u_i$ and $\theta_t^u$ is between $(x_i-m^u_i)^2$ and $(c_i-m^u_i)^2$. Thus,
\begin{align*}
    \prob{\delta_x(i,\vartheta_t^u) \neq \delta_c(i,\vartheta_t^u) \mid T_t(u)} &=\frac{\abs{(c_i-m^u_i)^2-(x_i-m^u_i)^2}}{2(R_t^u)^2}
    \\
    &\leq \frac{\abs{c_i-x_i}(\abs{c_i-m^u_i}+\abs{x_i-m^u_i})}{2(R_t^u)^2},
\end{align*}
where $\vartheta_t^u = m^u_i+\sigma_t^u\sqrt{\theta_t^u}$.

Now, suppose $x$ and $c$ are on the opposite sides of the median $m^u$ in coordinate $i$, i.e. $(x_i-m^u_i)(c_i - m^u_i) < 0$. The threshold cut $(i,m^u_i+\sigma_t^u\sqrt{\theta_t^u})$ separates $x$ and $c$ if and only if $\sigma_t^u (x_i-m^u_i) \geq 0$, $\theta_t^u \leq (x_i-m^u_i)^2$ or $\sigma_t^u (c_i-m^u_i) \geq 0$, $\theta_t^u \leq (c_i-m^u_i)^2$. Thus, we have for every coordinate $i$ with $(x_i-m^u_i)(c_i-m^u_i) < 0$,
\begin{align*}
    \prob{\delta_x(i,\vartheta_t^u) \neq \delta_c(i,\vartheta_t^u) \mid T_t(u)} &=
    \frac{(c_i-m^u_i)^2+(x_i-m^u_i)^2}{2(R_t^u)^2}\\
    &\leq \frac{\abs{c_i-x_i}(\abs{c_i-m^u_i}+\abs{x_i-m^u_i})}{2(R_t^u)^2},
\end{align*}
where the last inequality follows from $\abs{c_i-x_i} \geq \max\{\abs{c_i-m^u_i},\abs{x_i-m^u_i}\}$, since $c_i, x_i$ are on the different sides of $m^u_i$.

Since the coordinate $i^u_t$ is chosen randomly and uniformly from $\{1,\cdots d\}$, the probability that $x$ and $c$ are separated at iteration $t$ is
\begin{align*}
    \pr[\delta_x(i_t^u,\vartheta_t^u) \neq \delta_c(i_t^u,\vartheta_t^u) \mid T_t(u)] 
    &\leq \sum_{i=1}^d \frac{\abs{c_i-x_i}(\abs{c_i-m^u_i}+\abs{x_i-m^u_i})}{2d\cdot (R_t^u)^2} \\
    &\leq \frac{\norm{c-x}_2(\norm{x-m^u}_2+\norm{c-m^u}_2)}{d\cdot (R_t^u)^2},
\end{align*}
where the last inequality follows from the Cauchy-Schwarz inequality and $(\abs{c_i}+\abs{x_i})^2 \leq 2c_i^2+2x_i^2$.

Then, the expected penalty is 
\begin{align*}
    \E[\phi_t^u(x)]
    &\leq \E\bigg[\prob{\delta_x(i_t^u,\vartheta_t^u) \neq \delta_c(i_t^u,\vartheta_t^u) \mid T_t(u)}\cdot 2R_t^u\bigg]\\
    &\leq \E\bigg[2\norm{c-x}_2\cdot\frac{\norm{c-m^u}_2+\norm{x-m^u}_2}{d\cdot R_t^u}\bigg].
\end{align*}
\end{proof}

To bound the expected penalty for point $x$, we consider two types of cuts based on three parameters: the maximum radius $R_t^u$ and distances $\norm{x - m^u}_2$, $\norm{c - m^u}_2$ between $x, c$ and the median $m^u$ . 
If $x$ is separated from its original center $c$ at iteration $t$ with
$$
R_t^u \leq \sqrt{\log_2 k} \cdot \max\{\norm{x-m^u}_2,\norm{c-m^u}_2\},
$$
then we call this cut a light cut. Otherwise, we call it a heavy cut.

\begin{lemma}\label{lem:penalty-L_2}
In every call of \textsc{Partition\_Leaf($u$)} (see Algorithm~\ref{alg:threshold_tree_kmedians-L_2}), the expected penalty for a point $x\in X$ is upper bounded as follows:
$$
\E\bigg[\sum_t \phi_t^u(x)\bigg] \leq O(\sqrt{\log k})\cdot \mathrm{cost}_{\ell_2}(x,C).
$$
\end{lemma}

\begin{proof}
If point $x$ is not separated from its original center $c$ in~\textsc{Partition\_Leaf($u$)}, then the total penalty is $0$. If $x$ is separated from its center $c$ in this call, then there are two cases: (1) the point $x$ is separated by a light cut; (2) the point $x$ is separated by a heavy cut. We first show that the expected penalty due to a heavy cut is at most $O(\sqrt{\log k})\cost_{\ell_2}(x,C)$. 

Denote the set of all heavy cuts at iteration $t$ in~\textsc{Partition\_Leaf($u$)} by $H_t^u$:
$$
H_t^u = \{x: \max\{\norm{x-m^u}_2,\norm{c-m^u}_2\} < R_t^u/ \sqrt{\log_2 k} \}.
$$
Then, by Lemma~\ref{lem:penalty}, the expected penalty $x$ incurs due to  a heavy cut is at most
$$
\E\Bigg[\sum_{t: x \in H_t^u} \phi_t^u(x)\Bigg] \leq
2\norm{x-c}_2\cdot \E\Bigg[\sum_{t:x\in H_t^u} \frac{\norm{x -m^u}_2+\norm{c-m^u}_2}{d\cdot R_t^u}\Bigg].
$$

Since the maximum radius $R_t^u$ is a non-increasing function of $t$, we split all steps of this call of \textsc{Partition\_Leaf} into phases with exponentially decreasing values of $R_t^u$. At phase $s$, the maximum radius $R_t^u$ is in the range $(R_1^u/2^{s+1},R_1^u/2^s]$, where $R_1^u$ is the maximum radius at the beginning of 
\textsc{Partition\_Leaf($u$)}.

Consider an arbitrary phase $s$ and step $t$ in that phase. Let $R = R_1^u/2^s$. For every center $c'$ with $\norm{c'-m^u}_2 \in (R/2,R]$, the probability that this center $c'$ is separated from the main part at step $t$ in phase $s$ is at least
$$
\prob{\delta_{c'}(i_t^u,\vartheta_t^u) \neq \delta_{m^u}(i_t^u,\vartheta_t^u) \mid T_t(u)}  
= \sum_{j=1}^d \frac{1}{d}\cdot \frac{(c'_j-m^u_j)^2}{2(R_t^u)^2} 
= \frac{\norm{c'-m^u}_2^2}{2d \cdot (R_t^u)^2} \geq \frac{1}{4d},
$$
where the last inequality is due to $\norm{c'-m^u}_2 > R/2 \geq R^u_t/2$ for step $t$ in the phase $s$.
Since there are at most $k$ centers, all centers with norm in $(R/2,R]$ are separated from the main part in at most $4d\ln k$ steps in expectation. Thus, the expected length of each phase is $O(d\log k)$ steps, and hence, the expected penalty $x$ incurred during phase $s$ is at most 
\begin{align*}
2\|x-c\|_2\cdot \E\bigg[\sum_{\substack{t: x\in H_t^u\\R^{u}_t \in (R/2,R]}} \frac{\norm{x-m^u}_2+\norm{c-m^u}_2}{d\cdot R_t^u}\bigg] 
 &\leq
2\|x-c\|_2\cdot \E\bigg[\sum_{\substack{t: x\in H_t^u\\R^{u}_t \in (R/2,R]}} \frac{\norm{x-m^u}_2+\norm{c-m^u}_2}{d\cdot R/2}\bigg] 
\\ &\leq
O(\log k) \cdot \|x-c\|_2\cdot \frac{\norm{x-m^u}_2+\norm{c-m^u}_2}{R}.
\end{align*}
Let $s'$ be the last phase for which
\begin{equation}\label{def:s-prime}
R^u_1/2^{s'}\geq \sqrt{\log_2 k}\cdot \max\{\|x-m^u\|_2,\|c-m^u\|_2\}.
\end{equation}
Then, in every phase $s>s'$, all cuts separating $x$ from its original center $c$ are light. Hence, the total expected penalty due to a heavy cut is upper bounded by 
\begin{multline*}
O(\log k) \cdot \|x-c\|_2\cdot (\norm{x-m^u}_2+\norm{c-m^u}_2) \cdot \sum_{s=0}^{s'}\frac{2^s}{R_1^u}
=\\=
O(\log k) \cdot \|x-c\|_2\cdot (\norm{x-m^u}_2+\norm{c-m^u}_2) \cdot \frac{2^{s'+1}}{R_1^u}.
\end{multline*}
Using the definition~(\ref{def:s-prime}) of $s'$, we write
$$
    (\norm{x-m^u}_2+\norm{c-m^u}_2) \cdot \frac{2^{s'+1}}{R_1^u}  
    \leq 2\frac{\|x-m^u\|_2+\|c-m^u\|_2}{R^u_1/2^{s'}}\leq \frac{4}{\sqrt{\log_2 k}}.
$$

Thus, the expected penalty due to a heavy cut is at most $O(\sqrt{\log k})\cost_{\ell_2}(x,C)$. 

We now analyze the expected penalty due to a light cut. Consider an iteration $t$ in~\textsc{Partition\_Leaf($u$)} with $x \not \in H_t^u$. By the analysis in Lemma~\ref{lem:penalty}, the probability that $x$ and $c$ are separated at iteration $t$ is at most
$$
\frac{\norm{c-x}_2(\norm{x-m^u}_2+\norm{c-m^u}_2)}{d\cdot (R_t^u)^2}.
$$
The probability that $x$ or $c$ is separated from the main part at iteration $t$ is at least
$$
\frac{\max\{\norm{x-m^u}_2^2,\norm{c-m^u}_2^2\}}{d(R_t^u)^2}.
$$
If $x$ or $c$ is separated from the main part, then the point $x$ will not incur penalty at any step after $t$. Thus, the probability that $x$ and $c$ are separated by a light cut in the end of~\textsc{Partition\_Leaf($u$)} is at most
$$
   \frac{\norm{c-x}_2(\norm{x-m^u}_2+\norm{c-m^u}_2)}{\max\{\norm{x-m^u}_2^2,\norm{c-m^u}_2^2\}} 
\leq 
   \frac{2\norm{c-x}_2}{\max\{\norm{x-m^u}_2,\norm{c-m^u}_2\}}. 
$$

Since the penalty of a light cut is at most $R_t^u \leq \sqrt{\log_2 k}\cdot \max\{\norm{x-m^u}_2,\norm{c-m^u}_2\}$, the expected penalty due to a light cut is at most $O(\sqrt{\log k})\cdot \mathrm{cost}_{\ell_2}(x,C)$.

This concludes the proof of Lemma~\ref{lem:penalty-L_2}.
\end{proof}

For every node $u$, the main part contains the median $m^u$, which is also the $\ell_1$-median of all centers in $X_u$. Thus, each cut sampled in the call \textsc{Partition\_Leaf($u$)} separates at most half of all centers in $X_u$ from the origin. The main part contains at most half of centers in $X_u$ at the end of the call \textsc{Partition\_Leaf($u$)}. Therefore, each leaf node generated in the end of \textsc{Partition\_Leaf($u$)} contains at most half of centers in $X_u$. Thus, the depth of the recursion tree is at most $O(\log k)$. By Lemma~\ref{lem:penalty-L_2} and Equation~(\ref{eq:l2-penalty}), we get the conclusion.
\end{proof}

\section{Lower Bound for Threshold Tree}\label{sec:lb-kmeans}

\subsection{Lower bound for \texorpdfstring{$k$}{k}-means}

In this section, we show a lower bound on the price of explainability for $k$-means.

\begin{theorem}\label{thm:lower_bound_kmeans}
For any $k$, there exists an instance $X$ with $k$ clusters such that the cost of explainable $k$-means clustering for every tree $T$ is at least
$$
\mathrm{cost}_{\ell_2^2}(X,T) \geq \Omega\rbr{\frac{k}{\log k}} \mathrm{OPT}_{\ell_2^2}(X).
$$
\end{theorem}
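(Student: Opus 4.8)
The plan is to construct a single family of instances $X = X_k \subset \bbR^d$ on which the price of explainability for $k$-means is $\Omega(k/\log k)$. The skeleton is the usual one for such lower bounds: take $k$ pairwise far apart, very tight ``blobs'' $B_1,\dots,B_k$, where $B_i$ consists of equally many points inside a ball of radius $r$ around a center $c^i$. Because the blobs are tight and well separated, the clustering $(B_1,\dots,B_k)$ is the unique optimal $k$-means clustering up to a constant factor, so $\OPT_{\ell_2^2}(X) = \Theta(|X|\cdot r^2)$; the whole game is then to place the centers $c^1,\dots,c^k$ so that every threshold tree is forced to pay $\Omega(k/\log k)$ times this.

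The property I would engineer into the placement of the $c^i$ is: in every coordinate $a \in \{1,\dots,d\}$, the values $\{c^i_a\}$ are so crowded (consecutive values differ by at most, say, $r$) that \emph{no} axis-parallel hyperplane $\{x_a = \theta\}$ can separate even one pair of centers without passing well inside the radius-$r$ ball of some third center $c^\ell$ --- hence slicing $B_\ell$ into two parts of comparable size. Since a threshold tree must make at least one such cut (indeed $k-1$ of them), and since on the side of the cut not containing $c^\ell$ the surviving centers are all at distance $\Omega(\Delta)$ from the shaved-off portion of $B_\ell$ --- here $\Delta \gg r$ is the inter-center spacing --- those $\Omega(|X|/k)$ points end up at distance $\Omega(\Delta)$ from whatever leaf center the tree assigns them. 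One still has to check that the mean of their leaf cannot be simultaneously close to all of them; this follows because that leaf also contains points near other, distant centers, so its points have variance $\Omega(\Delta^2)$. Summing over the blobs that get shaved --- and arguing that the tree cannot isolate all of them cheaply, so that at least a constant fraction are shaved --- gives $\cost_{\ell_2^2}(X,T) = \Omega(|X|\cdot \Delta^2)$, i.e.\ a price of explainability of $\Omega((\Delta/r)^2)$.

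To make this $\Omega(k/\log k)$ I would draw the centers from an error-correcting code: put $d = \Theta(k/\log k)$ and take $c^1,\dots,c^k$ to be a scaled copy of $k$ binary codewords in $\{0,r\}^d$ of pairwise Hamming distance $\Omega(d)$, which exist since $k \le 2^{\Theta(d)}$; then $\Delta = \Theta(\sqrt d\, r) = \Theta(r\sqrt{k/\log k})$, and the large minimum distance of the code is exactly what guarantees that shaved-off points of $B_\ell$ (which stay within $r$ of $c^\ell$) are $\Omega(\Delta)$ from every other center. The ``crowding in every coordinate'' needed above is arranged because roughly half of the $c^i$ have $a$-th coordinate $0$ and half have it $r$, while a blob straddles both values; the blob shape (box versus ball, possibly anisotropic across coordinates) would be tuned so that an interior cut really removes a constant fraction of a blob.

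The main obstacle is the tension inside the construction, which is also what forces the extra $\log k$ slack: ``tight and well separated'' wants the centers spread out in $\ell_2$, whereas ``every axis-parallel cut slices through a blob'' wants the centers crowded together in \emph{every single} coordinate, and reconciling the two caps the usable spacing at $\Delta = \Theta(r\sqrt{k/\log k})$ rather than $\Theta(r\sqrt k)$ --- precisely where the bound falls a $\log k$ factor short of $\Omega(k)$. The second delicate point is the adversarial, adaptive nature of $T$: one must rule out a clever sequence of cuts that isolates the $k$ blobs while keeping the total shaved mass $o(|X|)$. I expect this to need a charging argument --- charge each blob the first time it is split, and show that only $O(\log k)$ of the blobs can be peeled off before they are ever split --- combined with the observation above that a leaf containing points near two distant codewords already has $k$-means cost $\Omega((\#\text{points})\cdot \Delta^2)$.
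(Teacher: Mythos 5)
Your high-level plan (well-separated centers that are nevertheless ``crowded'' in every single coordinate, so that no threshold cut can separate two centers without damaging a third cluster) is the same idea the paper uses, but there is a genuine quantitative gap in your cost accounting, and the construction as specified cannot deliver $\Omega(k/\log k)$. Your ratio $(\Delta/r)^2$ rests on the premise that an axis-parallel cut placed anywhere in the gap between adjacent center coordinates removes a \emph{constant fraction} of a blob of $\ell_2$-radius $r$, while the centers sit at pairwise distance $\Delta=\Theta(r\sqrt{k/\log k})$. These requirements are incompatible. Since every point $x$ of a blob satisfies $\sum_a (x_a-c_a)^2\le r^2$, at most a $4/d$ fraction of the blob (averaged over coordinates $a$) can deviate from $c_a$ by more than $r/2$; the tree is the adversary and picks the coordinate, so the fraction it is forced to shave in the worst coordinate is $O(1/d)$, not $\Omega(1)$. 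Conversely, if you fatten the blob so that a constant fraction really does cross a cut at distance $\Omega(\text{coordinate spacing})$ from $c_a$ in \emph{every} coordinate, the blob's $\ell_2$-radius blows up to $\Theta(r\sqrt d)$, $\OPT_{\ell_2^2}$ inflates by a factor of $d$, and the ratio collapses to $\Theta(1)$. No choice of blob shape escapes this, so the tension you flag as costing ``only a $\log k$ factor'' in fact caps your argument at a constant. Your secondary worry --- the adaptive charging argument needed to show that $\Omega(k)$ blobs get sliced --- is also unresolved and, in your accounting, genuinely necessary.

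The paper's construction resolves exactly this tension by abandoning ``constant fraction shaved.'' Each cluster is a large mass of points co-located with $c^i$ (contributing nothing to $\OPT_{\ell_2^2}$ and pinning the tree's optimal leaf centers) plus exactly \emph{two} outlier points $c^i\pm(\varepsilon,\dots,\varepsilon)$ with $\varepsilon=\Theta(\log k/k)$ and $d=\Theta(\log k)$, the centers being uniformly random in $[0,1]^d$. Then $\OPT_{\ell_2^2}(X)=2kd\varepsilon^2$, while with high probability \emph{every} threshold cut separates at least $\Omega(\varepsilon k)$ outlier points from their centers (each coordinate's $k$ intervals of length $\approx 2\varepsilon$ cover every threshold $\Omega(\varepsilon k)=\Omega(\log k)$ times, which is just enough for a union bound over the $O(dk)$ combinatorially distinct cuts --- this is where the $\log k$ is lost). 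Each separated point pays $\Omega(d)$ because the centers are pairwise $\Omega(\sqrt d)$ apart, so the \emph{root cut alone} already costs $\Omega(\varepsilon k d)=\Omega(1/\varepsilon)\cdot\OPT_{\ell_2^2}(X)=\Omega(k/\log k)\cdot\OPT_{\ell_2^2}(X)$; no multi-cut charging or adaptivity argument is needed. The moral difference from your plan: the gain $1/\varepsilon$ comes from the ratio (fraction of outliers separated per cut)$/$(squared relative offset of an outlier) $=\varepsilon/\varepsilon^2$, not from a constant fraction of all points paying $\Delta^2$.
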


To prove this lower bound, we construct an instance as follows. We uniformly sample $k$ centers $C = \cbr{c^1,c^2,\cdots, c^k}$ from the $d$-dimensional unit cube $[0,1]^d$ where the dimension $d = 300 \ln k$. For each center $c^i$, we add two points $c^i \pm (\varepsilon,\varepsilon,\cdots,\varepsilon)$ with $\varepsilon = 300 \ln k /k$. We also add many points at each center such that the optimal centers for any threshold tree remain almost the same. Specially, we can add $k^2$ points co-located with each center $c^i$. Then, if one center $c^i$ is shifted by a distance of $\varepsilon$ in the threshold tree clustering, the cost of the co-located points at $c^i$ is at least $k^2\varepsilon^2$. Since the optimal regular cost for this instance is $kd\varepsilon^2$, the total cost of the threshold tree is lower bounded by $\Omega(k/\log k)\mathrm{OPT}_{\ell_2^2}(X)$. Consequently, we consider the threshold tree with optimal centers shifted by at most $\varepsilon$.  

First, we show that any two centers defined above are far apart with high probability.
\begin{lemma}\label{lem:separation}
With probability at least $1-1/k^2$ the following holds: The squared distance between every two distinct centers $c$ and $c'$ in $C$ is at least $d/12$. 
\end{lemma}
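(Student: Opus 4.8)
The plan is a routine concentration-plus-union-bound argument. First I would fix one pair of distinct centers $c,c'\in C$ and observe that, since every coordinate of every center is drawn independently and uniformly from $[0,1]$, the quantity $\|c-c'\|_2^2=\sum_{i=1}^d(c_i-c'_i)^2$ is a sum of $d$ i.i.d.\ random variables $Y_i:=(c_i-c'_i)^2$, each supported on $[0,1]$. A one-line computation gives $\E[Y_i]=\int_0^1\!\int_0^1(u-v)^2\,du\,dv=\tfrac16$, hence $\E[\|c-c'\|_2^2]=d/6$.

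Next I would apply Hoeffding's inequality to the bounded i.i.d.\ sum $\sum_{i=1}^d Y_i$ with deviation $t=d/12$:
$$
\Pr\!\Big[\|c-c'\|_2^2<\tfrac{d}{12}\Big]\;\le\;\Pr\!\Big[\sum\nolimits_{i=1}^d Y_i-\tfrac{d}{6}\le-\tfrac{d}{12}\Big]\;\le\;\exp\!\Big(-\tfrac{2(d/12)^2}{d}\Big)=\exp\!\Big(-\tfrac{d}{72}\Big).
$$
With the choice $d=300\ln k$ fixed by the construction, this probability is $k^{-300/72}=k^{-25/6}$.

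Finally I would union-bound over the at most $\binom{k}{2}<k^2/2$ pairs of distinct centers: the probability that \emph{some} pair has squared distance below $d/12$ is at most $\tfrac12 k^2\cdot k^{-25/6}=\tfrac12 k^{-13/6}\le k^{-2}$, where the last inequality holds for every $k\ge1$ since $k^{-13/6}\le 2k^{-2}$. This yields the claimed bound of $1-1/k^2$.

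I do not anticipate a genuine obstacle here; the only thing to watch is the constant bookkeeping, and in particular checking that the specific value $d=300\ln k$ makes the Hoeffding exponent comfortably dominate the $\binom{k}{2}$ loss in the union bound (with several powers of $k$ to spare). A Chernoff bound applied to $\sum_i Y_i$, or to the Bernoulli count $\sum_i\ind\{|c_i-c'_i|>\tfrac12\}$, would work equally well at the cost of a slightly worse absolute constant in the $d/12$; I prefer Hoeffding because the $Y_i$ are uniformly bounded, which makes the estimate cleanest.
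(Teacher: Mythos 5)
Your proof is correct and follows essentially the same route as the paper's: compute $\E[\|c-c'\|_2^2]=d/6$, apply Hoeffding to the bounded i.i.d.\ sum to control the lower deviation to $d/12$, and union-bound over all pairs (the paper parameterizes the deviation as $\sqrt{2d\ln k}$ to get $k^{-4}$ per pair, while you plug in $t=d/12$ directly to get $k^{-25/6}$, which is an immaterial difference). The constant bookkeeping in your version checks out.
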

\begin{proof}
Consider any fixed two centers $c,c' \in C$. Since $c,c'$ are uniformly sampled from $[0,1]^d$, each coordinate of $c,c'$ is sampled from $[0,1]$; and centers $c,c'$ are sampled independently. Thus, we have
$$
\E_{c,c'} [\norm{c-c'}^2] = \sum_{i=1}^d \E_{c_i,c_i'} [(c_i-c_i')^2] = \frac{d}{6}.
$$
We use a random variable $X_i$ to denote $(c_i-c_i')^2$ for each coordinate $i \in \{1,\dots,d\}$. Since random variables $\{X_i\}_{i=1}^d$ are independent, by Hoeffding's inequality, we have

$$
\prob{\sum_{i=1}^d X_i - \E\bigg[\sum_{i=1}^d X_i\bigg]  \leq -\sqrt{2d\ln k}} \leq  e^{-4\ln k} = \frac{1}{k^4},
$$
where we used that $d = 300 \ln k$. This implies that the squared distance between $c$ and $c'$ is less than $d/12$ with probability at most $1/k^4$.
Using the union bound over all pairs of centers in $C$, we conclude that the squared distance between all pairs in $C$ is at least $d/12$ with probability at least $1-1/k^2$.
\end{proof}

If any two centers are far apart, then a point $x$ separated from its original center will incur a large penalty. Thus, we can get a lower bound if there exists an instance which satisfies: (1) any two centers are separated by a large distance; (2) every threshold tree separates a relatively large portion of points from their original centers. In particular, we prove that with probability $1-o(1)$, every threshold cut separates a relatively large portion of points from their original centers
in the random instance we constructed.

\begin{lemma}\label{lem:lb_cut}
With probability at least $1-1/k^2$, the following holds: every threshold cut $(i,\theta)$ with $i\in \{1,2,\cdots,d\}$ and $\theta \in [0,1)$ separates at least $\varepsilon k/4$ points from their original centers.
\end{lemma}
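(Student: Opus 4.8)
The plan is to reduce the continuum of possible threshold cuts to a finite family, and then control each member of that family by a Chernoff bound together with a union bound over coordinates and family members.

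First I would determine exactly which points a given cut separates from their original centers. Fix $i\in\{1,\dots,d\}$ and $\theta$. The nearest center in $C$ to a satellite point $c^j\pm(\varepsilon,\dots,\varepsilon)$ is $c^j$ itself (and a point co-located with a center is never separated from that center). The cut $(i,\theta)$ separates one of the two satellites of $c^j$ from $c^j$ exactly when $\theta\in[c^j_i-\varepsilon,\,c^j_i+\varepsilon)$, i.e.\ when $c^j_i\in(\theta-\varepsilon,\theta+\varepsilon]$. Hence the number of points separated by $(i,\theta)$ is at least
\[
N_i(\theta)\;:=\;\bigl|\{\, j : c^j_i\in(\theta-\varepsilon,\theta+\varepsilon]\,\}\bigr|,
\]
so it suffices to prove that $N_i(\theta)\ge \varepsilon k/4 = 75\ln k$ for every coordinate $i$ and every $\theta\in[0,1)$, with probability at least $1-1/k^2$.

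Next I would discretize in $\theta$. Since $\varepsilon<1$, put $m=\lceil 1/\varepsilon\rceil$ and choose reals $0=a_1<a_2<\dots<a_m=1-\varepsilon$ with consecutive gaps at most $\varepsilon$. A short case analysis — the bulk $\theta\le 1-\varepsilon$ using the gap bound, and $\theta\in(1-\varepsilon,1)$ using $a_m=1-\varepsilon$ — shows that for every $\theta\in[0,1)$ there is an index $\ell$ with $a_\ell\in(\theta-\varepsilon,\theta]$, and then the closed interval $J_\ell:=[a_\ell,a_\ell+\varepsilon]$ is contained in $(\theta-\varepsilon,\theta+\varepsilon]$. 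Consequently $N_i(\theta)\ge M_{i,\ell}$, where $M_{i,\ell}:=|\{j:c^j_i\in J_\ell\}|$, and it is enough to show that $M_{i,\ell}\ge 75\ln k$ simultaneously over all $i\in\{1,\dots,d\}$ and $\ell\in\{1,\dots,m\}$. For fixed $(i,\ell)$ the variable $M_{i,\ell}$ is $\mathrm{Bin}(k,\varepsilon)$ with mean $k\varepsilon=300\ln k$, so a Chernoff lower-tail bound gives $\pr\!\bigl[M_{i,\ell}<\tfrac14\cdot 300\ln k\bigr]\le \exp\!\bigl(-(3/4)^2(300\ln k)/2\bigr)=k^{-84.375}$. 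Since $d\,m\le 300\ln k\cdot\bigl(\tfrac{k}{300\ln k}+1\bigr)\le 2k$ for $k$ past a small absolute constant, a union bound makes the total failure probability at most $2k\cdot k^{-84.375}\le 1/k^2$, which combined with the previous reductions proves the lemma.

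The one delicate point is the discretization step: the finite family of intervals must be chosen so that each interval is \emph{contained in} the width-$2\varepsilon$ window of every nearby threshold (contained, not merely overlapping, since overlap does not transfer a lower bound on the point count), and one must keep track of how those windows get clipped near the endpoints $0$ and $1$ of the cube. Everything after that is a routine Chernoff-plus-union-bound argument, and the generous choices $d=300\ln k$, $\varepsilon=300\ln k/k$ leave ample slack — in particular the union bound over roughly $k$ events is dwarfed by the per-event failure probability $k^{-84.375}$.
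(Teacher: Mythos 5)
Your proof is correct, and at the top level it follows the same strategy as the paper: reduce the continuum of cuts to a finite family, apply a Chernoff lower-tail bound to the number of centers whose coordinate falls in a width-$\Theta(\varepsilon)$ window, and union bound. The one genuine difference is the discretization. The paper discretizes with the (random, data-dependent) endpoints of the $2k$ intervals $I^j_i$, which forces it to condition on the location of one endpoint and argue that the remaining $k-1$ centers are still uniform, and to assert that the nearest endpoint ``covers exactly the same set of intervals'' as the original cut -- a statement that needs the right open/closed conventions to give a lower bound rather than just equality up to boundary effects. You instead use a fixed deterministic grid of $m=\lceil 1/\varepsilon\rceil$ points per coordinate and insist on \emph{containment} of a fixed length-$\varepsilon$ interval inside the window $(\theta-\varepsilon,\theta+\varepsilon]$ of every nearby threshold, which is exactly the right direction for transferring a lower bound on the count. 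This buys you two things: the random variables $M_{i,\ell}$ are clean i.i.d.\ Binomial$(k,\varepsilon)$ sums with no conditioning, and the union bound is over $dm\approx k$ events rather than $2dk$. The paper's version, in exchange, does not need the small-$k$ caveat in bounding $dm$ and adapts more directly if the centers were drawn from a non-uniform distribution. Both yield failure probability far below $1/k^2$, so the difference is one of bookkeeping rather than substance.
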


\begin{proof}
Consider a fixed coordinate $i \in \{1,\dots,d\}$. We project each center and its rectangular neighborhood onto this coordinate. For each center $c^j \in C$, we define an interval $I^j_i$ as the intersection of $[0,1]$ and the $\varepsilon$-neighborhood of its projection $c^j_i$, i.e. $I^j_i = (c^j_i - \varepsilon, c^j_i + \varepsilon) \cap [0,1]$. Each interval $I^j_i$ has length at least $\varepsilon$. If we pick a threshold cut inside any interval $I^j_i$, then we separate at least one points from center $c^j$. In this case, the interval $I^j_i$ is called covered by this threshold cut. Then, we give the lower bound on the minimum number of intervals covered by a threshold cut. 

For a fixed set of centers $C$, we consider at most $2k$ special positions for the threshold cut at coordinate $i$ as follows. Let $E_i$ be the set containing two end points of intervals $I^j_i$ for all centers $c^j$. For any threshold cut at coordinate $i$, the closest position in set $E_i$ covers exactly the same set of intervals as this threshold cut. Thus, we only need to consider threshold cuts at positions in $E_i$. 

For centers chosen uniformly from $[0,1]^d$, the set $E_i$ contains $2k$ random variables. Suppose we pick a threshold cut at a position $\theta$ in $E_i$ related to interval $I^j_i$. Conditioned on the position $\theta$, the other $k-1$ centers $c^j$ for $j\neq j^*$ are uniformly distributed in $[0,1]^d$ since all centers are chosen independently. For $j \in \{1,2,\cdots,k\}\setminus \{j^*\}$, let $Y^j_i$  be the indicator random variable that the interval $I^j_i$ contains this position $\theta$. For each variable $Y^j_i$, we have $\varepsilon \leq \prob{Y^j_i = 1} \leq 2\varepsilon$. Since random variables $Y^j_i$ are independent, by the Chernoff bound for Bernoulli random variables, we have
$$
\prob{\sum_{j} Y^j_i - \E\Bigg[\sum_{j} Y^j_i\Bigg] \leq -\sqrt{18\varepsilon k \ln k} \mid \theta } \leq e^{-4\ln k} = \frac{1}{k^4}.
$$
Thus, we have the number of intervals containing this position $\theta$ is at least $\varepsilon k/4$ with probability at least $1-1/k^4$.

Since we have $2k$ positions $E_i$ for each coordinate $i \in \{1,2,\cdots,d\}$, there are total $2dk$ positions for threshold cuts. Using the union bound over all positions, we have the minimum number of intervals covered by a threshold cut is at least $\varepsilon k/4$ with probability at least $1-1/k^2$. Since the threshold cut separates one point from its original center for each covered interval, we have every threshold cut separates at least $\varepsilon k/4$ points from their original centers in this case.
\end{proof}

\begin{proof}[Proof of Theorem~\ref{thm:lower_bound_kmeans}]

By Lemma~\ref{lem:separation}, we can only consider the instance where any two centers are separated with the squared distance at least $d/12$. Note that the optimal centers for any threshold tree remain almost the same as centers $C$. Thus, we analyze the $k$-means cost given by any threshold tree with respect to center $C$. If a point in $X$ is separated from its original center, this point will finally be assigned to another center in $C$. By the triangle inequality, the $k$-means cost of this point is at least $d/20$.  By Lemma~\ref{lem:lb_cut}, there exists an instance such that any threshold cut separates at least $\varepsilon k/4$ points from their original centers.
Thus, there exists an instance $X$ such that any threshold tree $T$ has the $k$-means cost at least
$$
\mathrm{cost}_{\ell_2^2}(X,T) \geq \frac{\varepsilon k}{4}\cdot \frac{d}{20} = \frac{\varepsilon kd}{80}.
$$
Note that the optimal regular $k$-means cost for this instance $X$ is  
$$
\mathrm{OPT}_{\ell_2^2}(X) =2k \cdot \varepsilon^2d.
$$
Therefore, the $k$-means cost for this instance $X$ given by any threshold tree $T$ is at least
$$
\mathrm{cost}_{\ell_2^2}(X,T) \geq \frac{1}{160\varepsilon}\cdot \mathrm{OPT}_{\ell_2^2}(X) 
= \Omega\rbr{\frac{k}{\log k}}\cdot \mathrm{OPT}_{\ell_2^2}(X) .
$$
\end{proof}


\subsection{\texorpdfstring{Lower bound for $k$-medians in $\ell_2$}{Lower bound for k-medians in l2}}

In this section, we show a lower bound on the price of explainability for $k$-medians in $\ell_2$.

\begin{theorem}\label{thm:lower_bound_kmedian}
For every $k\geq 1$, there exists an instance $X$ with $k$ clusters such that the $k$-medians with $\ell_2$ objective cost of every threshold tree $T$ is at least
$$
\mathrm{cost}_{\ell_2}(X,T) \geq \Omega(\log k) \mathrm{OPT}_{\ell_2}(X).
$$
\end{theorem}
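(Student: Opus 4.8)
The plan is to follow the same high-level recipe as the $k$-means lower bound (Theorem~\ref{thm:lower_bound_kmeans}) and as the $\Omega(\log k)$ construction of \citet{DFMR20} for $\ell_1$-medians — a ``hard'' instance whose $k$ clusters are pairwise far apart, each decorated with a few small \emph{satellite} points and a large batch of co-located \emph{filler} points — but with the scales tuned so that the $\ell_2$ penalty accumulates over $\Omega(\log k)$ geometric levels rather than being extracted from one cut. Concretely, I would take $d=\Theta(\log k)$, place $k$ centers $c^1,\dots,c^k$ in $[0,1]^d$ (randomly as in Theorem~\ref{thm:lower_bound_kmeans}, or, if the union bound below demands it, on a deterministic self-similar grid), attach to each $c^i$ a small family of satellites at $\ell_2$-distance at most $r$ from $c^i$, and co-locate $\mathrm{poly}(k)$ filler points at each $c^i$. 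As in the discussion preceding Lemma~\ref{lem:separation}, the filler forces the center that any threshold tree assigns to a leaf to lie within $r$ of some $c^i$ (otherwise its cost is astronomically large); and, as in Lemma~\ref{lem:separation}, all pairwise $\ell_2$ distances are at least $D$ for some $D\gg r$. Hence $\OPT_{\ell_2}(X)\le r\cdot(\#\text{satellites})$ (put a center at every $c^i$), while for every threshold tree $T$ one has $\cost_{\ell_2}(X,T)\ge (D/2)\cdot N_{\mathrm{sep}}(T)$, where $N_{\mathrm{sep}}(T)$ counts the satellites that $T$ separates from their own center. So it suffices to show $N_{\mathrm{sep}}(T)\ge \Omega(\log k)\cdot r\cdot(\#\text{satellites})/D=\Omega(\log k)\cdot\OPT_{\ell_2}(X)/D$ for every $T$.

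The combinatorial core is this bound on $N_{\mathrm{sep}}(T)$. I would partition the internal nodes of $T$ into $\Theta(\log k)$ ``scales,'' either by the number of surviving centers in the node's cell (a cell with $m$ centers sits at scale $\lfloor\log_2(k/m)\rfloor$) or by the geometric diameter of the cell. The instance should be designed to have the robust ``every cut is costly at its own scale'' property: for every cell $C_u$ reachable by a threshold tree and every coordinate cut that separates at least one pair of centers of $C_u$, that cut separates from their centers a constant fraction of the satellites of the centers in $C_u$. The point is that once a cell has shrunk to the appropriate scale, its extent in the active coordinate is comparable to the satellite offset at that scale, so no threshold placed inside the cell can dodge all the satellites. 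Given this, a telescoping/potential argument over the $\Theta(\log k)$ scales — tracking how many satellites are still attached to their centers when their cell first reaches each scale — yields $N_{\mathrm{sep}}(T)=\Omega(\log k)\cdot(\#\text{satellites})$ after the $r/D$ normalization; morally each satellite is ``threatened'' at $\Omega(\log k)$ essentially independent scales and is lost at one of them with constant probability per scale.

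The main obstacle is establishing the ``costly at its own scale'' property \emph{uniformly over all threshold trees}: the adversary picks the tree, hence the reachable cells and the cut positions and coordinates, precisely to minimize damage. For a random instance this is strictly harder than Lemma~\ref{lem:lb_cut}, which only had to union bound over the $2dk$ candidate thresholds among all $k$ centers; here one must also range over the cells, of which a BSP can produce roughly $k^{\Theta(\log k)}$ — too many for the $1/k^{\Omega(1)}$-type tail bounds used there. I expect the fix to be one of: (i) replace randomness by a deterministic, self-similar multi-scale construction for which the property holds by design at every scale; or (ii) keep the random instance but swap the per-scale argument for an amortized one that charges the separations against the ``center mass'' $\sum_{u\ \mathrm{internal}}|C_u|$ of the tree, which is $\Omega(k\log k)$ for \emph{every} binary tree with $k$ leaves (minimized by the balanced tree), combined with a proof that a cut at a node with $m$ centers must separate $\Omega(m/\mathrm{polylog}\,k)$ satellites. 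A secondary point is to choose $r$, $D$, and the inter-scale ratios so that passing from the $\ell_1$-style bookkeeping to genuine $\ell_2$ distances does not leak a $\sqrt{\log k}$ factor; this is handled by letting the satellite offset at each scale lie along a single coordinate, so that both $\|c^i-c^j\|_2$ and the per-satellite cost are pinned down by the intended constants.
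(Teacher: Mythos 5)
Your construction template (well-separated centers, small satellites, heavy filler, cost $\ge (D/2)\cdot N_{\mathrm{sep}}$ vs.\ $\OPT \le r\cdot(\#\text{satellites})$) matches the paper's, and you correctly identify the crux: proving that \emph{every} threshold tree must separate enough satellites, which requires controlling all $k^{\Theta(\log k)}$ reachable cells rather than the $O(dk)$ candidate cuts of Lemma~\ref{lem:lb_cut}. But you stop at naming this obstacle and offering two unexecuted fixes, so the combinatorial core is missing. The paper's resolution is a concrete combination of the devices you only gesture at: (1) centers are drawn from a grid of step $\varepsilon = 1/\lceil \ln k\rceil$, so cuts are effectively discrete and only paths of length $<\log_2 k/4$ need be considered, making the union bound over paths of size $e^{O(\log k\log\log k)}$ against a per-cell Chernoff failure probability of $e^{-\Omega(\varepsilon\sqrt{k})}$; (2) the concentration only holds for cells with many surviving centers, so the paper introduces an escape clause (``at most $\sqrt{k}$ undamaged centers'') and later shows small cells contribute at most $k^{3/4}$ centers per level, which is negligible; (3) the accounting is per \emph{level} of the tree, not per geometric scale: with $\varepsilon=1/\log k$ each cut at a large node damages only an $\varepsilon$-fraction of its undamaged centers, so either $k/2$ centers are damaged within $\log_2 k/4$ levels or each of those levels damages $\Omega(\varepsilon k)$ new centers — either way a constant fraction of all satellites is separated, and the $\Omega(\log k)$ ratio then comes entirely from $D/r = \Theta(1/\varepsilon)$. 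Your ``telescoping over scales'' and ``lost with constant probability per scale'' heuristics do not reduce to this, and the specific choice $\varepsilon=1/\log k$ that forces $\Omega(\log k)$ levels is exactly the parameter you leave undetermined.

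One of your design choices would actively break the argument: you propose putting each satellite offset ``along a single coordinate'' to avoid losing a $\sqrt{\log k}$ factor. Then a threshold cut in any of the other $d-1$ coordinates can pass arbitrarily close to a center without ever separating that satellite, and the essential property that every cut through a cell damages a constant ($\varepsilon$-) fraction of that cell's centers fails. The paper instead uses the diagonal offset $\pm(\varepsilon,\dots,\varepsilon)$, so a center is damaged by \emph{any} coordinate cut landing within $\varepsilon$ of it in \emph{any} coordinate, and the satellite's $\ell_2$ cost $\varepsilon\sqrt{d}$ versus the separation penalty $\Theta(\sqrt{d})$ still yields exactly the factor $1/\varepsilon=\Theta(\log k)$ with no loss.
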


To prove this lower bound, we use the construction similar to that used in Theorem~\ref{thm:lower_bound_kmeans}. We discretize the $d$-dimensional unit cube $[0,1]^d$ into grid with length $\varepsilon = 1/\ceil{\ln k}$, where the dimension $d = 300 \ln k$. We uniformly sample $k$ centers $C = \{c^1,c^2,\cdots,c^k\}$ from the above grid $\{0,\varepsilon,2\varepsilon,\cdots,1\}^d$. For each center $c^i$, we add $2$ points $c^i \pm (\varepsilon,\varepsilon,\cdots,\varepsilon)$ to this center. Similar to Theorem~\ref{thm:lower_bound_kmeans}, we also add many points at each center such that the optimal centers for any threshold tree remain almost the same. 

Similar to Lemma~\ref{lem:separation}, we show that any two centers defined above are far apart with high probability.
\begin{lemma}\label{lem:separation_l2}
With probability at least $1-1/k^2$ the following holds: The distance between every two distinct centers $c$ and $c'$ in $C$ is at least $\sqrt{d}/4$. 
\end{lemma}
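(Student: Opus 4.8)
The plan is to mirror the proof of Lemma~\ref{lem:separation}, with the continuous coordinate distribution replaced by the discrete uniform distribution on the grid. Fix two distinct centers $c, c' \in C$. Since all $k$ centers are sampled independently and coordinate-wise uniformly from $\{0,\varepsilon,2\varepsilon,\dots,1\}^d$, the $2d$ coordinates $c_1,\dots,c_d,c'_1,\dots,c'_d$ are mutually independent, and for each coordinate $i$ the random variable $X_i := (c_i - c'_i)^2$ takes values in $[0,1]$. First I would compute $\E[X_i]$: writing $c_i = \varepsilon U$, $c'_i = \varepsilon V$ with $U,V$ independent and uniform on $\{0,1,\dots,n\}$ for $n = \lceil \ln k\rceil$ (so that $\varepsilon n = 1$), we get $\E[X_i] = 2\varepsilon^2\var{U} = \varepsilon^2\cdot\frac{n^2+2n}{6} = \frac{1}{6} + \frac{1}{3n} > \frac16$. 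Summing over coordinates, $\E[\norm{c-c'}_2^2] = \sum_{i=1}^d \E[X_i] \geq d/6$.

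Next I would apply Hoeffding's inequality to the sum $\sum_{i=1}^d X_i$ of independent $[0,1]$-valued random variables: for every $t>0$,
$$
\prob{\sum_{i=1}^d X_i - \E\Big[\sum_{i=1}^d X_i\Big] \leq -t} \leq e^{-2t^2/d}.
$$
Choosing $t = \sqrt{2d\ln k}$ makes the right-hand side at most $e^{-4\ln k} = 1/k^4$. Using $d = 300\ln k$, a direct check gives $d/6 - \sqrt{2d\ln k} = 50\ln k - \sqrt{600}\,\ln k > 18.75\ln k = d/16$, so with probability at least $1-1/k^4$ we have $\norm{c-c'}_2^2 \geq d/16$, i.e. $\norm{c-c'}_2 \geq \sqrt{d}/4$.

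Finally I would take a union bound over all $\binom{k}{2} < k^2$ unordered pairs of distinct centers: the probability that some pair lies at distance below $\sqrt{d}/4$ is at most $k^2\cdot 1/k^4 = 1/k^2$. Hence with probability at least $1-1/k^2$ every pair of distinct centers in $C$ is at distance at least $\sqrt{d}/4$, which is the claim. There is essentially no obstacle here, as the argument is the $\ell_2$ analogue of Lemma~\ref{lem:separation}; the only points requiring a little care are that the discrete coordinate variance still yields $\E[X_i]$ bounded below by a constant just above $1/6$, and that the constant $300$ in $d = 300\ln k$ is chosen large enough to absorb the Hoeffding fluctuation $\sqrt{2d\ln k}$ while still leaving $\norm{c-c'}_2^2$ above the $d/16$ threshold.
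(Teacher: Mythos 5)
Your proof is correct, but it takes a different route from the paper. The paper does not redo the concentration argument for the discrete distribution: it instead couples the grid sampling with continuous sampling, observing that a uniform grid center can be obtained by sampling a candidate uniformly from $[-\varepsilon/2,1+\varepsilon/2]^d$ and rounding to the nearest grid point. It then invokes Lemma~\ref{lem:separation} to get pairwise distance at least $\sqrt{d/12}$ for the candidates and absorbs the rounding error (at most $\varepsilon\sqrt{d}=o(\sqrt d)$ per point) to conclude the $\sqrt{d}/4$ bound. You instead compute the discrete coordinate variance directly ($\E[(c_i-c'_i)^2]=\tfrac16+\tfrac1{3n}>\tfrac16$), apply Hoeffding to the $[0,1]$-valued summands, and union-bound over pairs; all the constants check out ($50\ln k-\sqrt{600}\,\ln k>18.75\ln k=d/16$). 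Your version is more self-contained and in fact slightly cleaner on a technical point: the paper's reduction applies Lemma~\ref{lem:separation} to candidates drawn from the enlarged cube $[-\varepsilon/2,1+\varepsilon/2]^d$ even though that lemma is stated for $[0,1]^d$ (harmless, since the variance only increases, but it is a small mismatch you avoid). The paper's approach buys brevity and reuse of an existing lemma; yours buys a direct, fully rigorous verification at the cost of repeating the Hoeffding calculation.
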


\begin{proof}
To sample a center from the grid uniformly, we can first sample a candidate center uniformly from the cube $[-\varepsilon/2,1+\varepsilon/2]^d$ and then move it to the closest grid point. Note that the $\ell_2$-distance from every point in this cube to its closest grid point is at most $\varepsilon\sqrt{d} = o(1)$. By Lemma~\ref{lem:separation}, the $\ell_2$ distance between every pairs of candidate centers is at least $\sqrt{d/12}$ with probability at least $1-1/k^2$. Thus, the distance between every two distinct centers is at least $\sqrt{d}/4$ with probability at least $1-1/k^2$.
\end{proof}

For every node in the threshold tree, we can specify it by threshold cuts in the path from the root to this node.  Thus, we define a path $\pi$ as an ordered set of tuples $(i_j,\theta_j,\sigma_j)$, where $(i_j,\theta_j)$ denotes the $j$-th threshold cut in this path and $\sigma_j \in \{\pm 1\}$ denotes the direction with respect to this cut. We use $u(\pi)$ be the node specified by the path $\pi$. We define a center is damaged if one of its two points are separated by this cut, otherwise a  center is undamaged. Let $F_u$ be the set of undamaged centers in node $u$. 

\begin{lemma}\label{lem:damage}
With probability at least $1-1/k$, the following holds: For every path $\pi$ with length less than $\log_2 k/4$, we have (a) the node $u(\pi)$ contains at most $\sqrt{k}$ undamaged centers; or (b) every cut in node $u(\pi)$ damages at least $\varepsilon|F_{u(\pi)}|/2$ centers in $F_{u(\pi)}$.  
\end{lemma}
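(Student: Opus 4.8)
The plan is to fix an arbitrary root-to-node path $\pi$ of length $\ell<\log_2 k/4$ and show that, except with probability $e^{-\Omega(\sqrt k/\ln k)}$ over the random centers, one of the two alternatives already holds at $u(\pi)$; a union bound over all short paths then finishes the proof, and the length restriction is exactly what keeps that union bound affordable. The first ingredient is a coarsening of the set of cuts. Every center coordinate lies on the grid $\{0,\varepsilon,\dots,1\}$, which has only $q:=\lceil\ln k\rceil+1$ values, so two thresholds bracketed by the same consecutive pair of grid points have identical effect on which centers are separated and which are damaged (equivalently, on which satellite, if any, is cut off). Hence a path is determined up to equivalence by a sequence of at most $\ell$ triples (coordinate, bracketed pair of grid points, side), and the number of inequivalent short paths is at most $(2dq)^{\ell}=k^{O(\log\log k)}$, using $d=300\ln k$ and $q=O(\ln k)$.

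Now fix such a $\pi$. Its cuts confine every surviving center point to a box $\prod_i T_i$, each $T_i$ a contiguous set of grid values, and a center $c^j$ belongs to $F_{u(\pi)}$ precisely when it is consistent with all the cuts of $\pi$ and damaged by none, i.e.\ when $c^j_i\in S_i$ for every $i$, where $S_i$ is $T_i$ with at most its two extreme values removed (a cut of $\pi$ damages only the grid value(s) adjacent to its own threshold, and inside $T_i$ such a value can only be an endpoint). Condition on the random set $F:=F_{u(\pi)}$. If $|F|\le\sqrt k$ we are in alternative~(a), so assume $|F|>\sqrt k$. Conditioned on $F_{u(\pi)}=F$, the coordinates of the members of $F$ are independent, the $i$-th coordinate being uniform on $S_i$; since $|S_i|\le q$, for each $v\in S_i$ the count $N_{i,v}:=\#\{j\in F:c^j_i=v\}$ is $\mathrm{Bin}(|F|,1/|S_i|)$ with mean at least $|F|/q>\varepsilon|F|/2$, so a Chernoff bound gives $\Pr[N_{i,v}<\varepsilon|F|/2]\le e^{-\Omega(\varepsilon|F|)}\le e^{-\Omega(\sqrt k/\ln k)}$. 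A union bound over the at most $dq=O(\ln^2 k)$ pairs $(i,v)$ shows that, except with probability $O(\ln^2 k)\,e^{-\Omega(\sqrt k/\ln k)}$, \emph{every} grid value of \emph{every} $S_i$ is occupied by at least $\varepsilon|F|/2$ members of $F$.

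On this event alternative~(b) holds. A cut $\omega=(i,\theta)$ that separates two center points present in $u(\pi)$ --- the only cuts that can be used at $u(\pi)$ --- has its bracketing grid points $g_-(\theta)<g_+(\theta)$ both lying in $T_i$; since $S_i$ omits only the endpoints of $T_i$, at least one of $g_-(\theta),g_+(\theta)$ lies in $S_i$. (The degenerate cases $|T_i|\le2$ are immediate: an empty $S_i$ forces $F_{u(\pi)}=\varnothing$, and a one-element $S_i$ forces every valid coordinate-$i$ cut to bracket that single value.) A center $c\in F$ whose $i$-th coordinate equals that value has one satellite on each side of $\theta$ and is therefore damaged by $\omega$, so by the previous paragraph $\omega$ damages at least $\varepsilon|F|/2=\varepsilon|F_{u(\pi)}|/2$ centers of $F$. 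Summing the per-path failure probability over all $k^{O(\log\log k)}$ inequivalent short paths gives total failure probability $k^{O(\log\log k)}\,e^{-\Omega(\sqrt k/\ln k)}=o(1/k)$, which proves the lemma.

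The step I expect to need the most care is the structural claim of the third paragraph --- that a usable cut at $u(\pi)$ always damages centers whose relevant coordinate sits strictly inside $T_i$ rather than only at the ``damaged-away'' endpoints --- together with the bookkeeping of how the cuts of $\pi$ carve $S_i$ out of the grid, especially when $\pi$ has collapsed some coordinate to one or zero grid values. The probabilistic content is a routine Chernoff-plus-union-bound estimate; the only quantitative thing to watch is that the $k^{O(\log\log k)}$ count of short paths is genuinely dominated by the $e^{-\Omega(\sqrt k/\ln k)}$ per-path savings, which is precisely why the construction uses the coarse grid $\varepsilon=1/\lceil\ln k\rceil$ and why the lemma restricts attention to paths of length below $\log_2 k/4$.
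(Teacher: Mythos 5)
Your proof is correct and follows essentially the same route as the paper's: fix a path, condition on the set of undamaged centers, apply a Chernoff bound (per coordinate--grid-value pair rather than per cut, but these are equivalent), and union over discretized cuts and over paths of length below $\log_2 k/4$. You supply more detail than the paper does (the product structure of the conditional distribution and the bracketing argument); the only imprecision is the claim that a usable cut's two bracketing grid values both lie in $T_i$ --- a cut that clips off a satellite sticking just outside $T_i$ violates this --- but the conclusion you actually need (at least one bracketing value lies in $S_i$, or else the cut is vacuous) still holds in that boundary case.
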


\begin{proof}
Consider any fixed path $\pi$ with length less than $\log_2 k/4$. We upper bound the probability that both events (a) and (b) do not happen conditioned on $F_{u(\pi)}$.  If $|F_{u(\pi)}| \leq \sqrt{k}$, then the event (a) happens. For the case $F_{u(\pi)}$ contains more than $\sqrt{k}$ centers, we pick an arbitrary threshold cut $(i,\theta)$ in the node $u(\pi)$. For every center $c$ in $F_{u(\pi)}$, the probability we damage this center $c$ is at least $\varepsilon$. Let $X_j$ be the indicator random variable that the $j$-th center in $F_{u(\pi)}$ is damaged by the threshold cut $(i,\theta)$. Then, we have the expected number of centers in $F_{u(\pi)}$ damaged by this cut $(i,\theta)$ is 
$$
\E\bigg[\sum_{j} X_j\bigg] \geq \varepsilon \abs{F_{u(\pi)}}.
$$
Let $\mu = \E[\sum_{j} X_j]$. By the Chernoff bound for Bernoulli random variables, we have
$$
    \prob{\sum_j X_j \leq \varepsilon\abs{F_{u(\pi)}}/2} \leq \prob{\sum_j X_j \leq \mu/2} 
    \leq e^{-\mu/8} \leq e^{-\varepsilon\sqrt{k}/8}.
$$
Using the union bound over all threshold cuts in $u(\pi)$, the failure probability that both event (a) and (b) do not happen is at most $e^{-\varepsilon\sqrt{k}/16}$. The number of paths with length less than $\log_2 k/4$ is at most $m(2d/\varepsilon)^m \leq e^{-\log^2 k}$. Thus, by the union bound over all paths with length less than $\log_2 k /4$, we get the conclusion.  
\end{proof}

\begin{proof}[Proof of Theorem~\ref{thm:lower_bound_kmedian}]
By Lemma~\ref{lem:separation_l2} and Lemma~\ref{lem:damage}, we can find an instance $X$ such that both two properties hold. We first show that the threshold tree must separate all centers. Suppose there is a leaf contains more than one center. Since the distance between every two centers is at least $\sqrt{d}/4$ and there are many points at each center, the cost for this leaf can be arbitrary large. To separate all centers, the depth of the threshold tree is at least $\ceil{\log_2 k}$.

We now lower bound the cost for every threshold tree that separates all centers. Consider any threshold tree $T$ that separates all centers. We consider the following two cases. If the number of damaged centers at level $\floor{\log_2 k}/4$ of threshold tree $T$ is more than $k/2$, then the cost given by $T$ is at least
$$
\mathrm{cost}_{\ell_2}(X,T) \geq \frac{k}{2}\cdot\frac{\sqrt{d}}{8} = \frac{k\sqrt{d}}{16}.
$$
If the number of damaged centers at level $\floor{\log_2 k}/4$ of threshold tree $T$ is less than $k/2$, then the number of undamaged centers at every level $i= 1,2,\dots,\floor{\log_2 k}/4$ is at least $k/2$. We call a node $u$ a small node if it contains at most $\sqrt{k}$ undamaged centers, otherwise we call it a large node. Then, we lower bound the number of damaged centers generated at any fixed level $i \in \{1,2,\cdots,\floor{\log_2 k}/4\}$. Since the number of nodes at level $i$ is at most $k^{1/4}$, the number of undamaged centers in small nodes at level $i$ is at most $k^{3/4}$. Thus, the number of undamaged centers in large nodes at level $i$ is at least $k/4$. By Lemma~\ref{lem:damage}, the number of damaged centers generated at level $i$ is at least $\varepsilon k/8$. Therefore, the cost given by this threshold tree $T$ is at least
$$
\mathrm{cost}_{\ell_2}(X,T) \geq \frac{\floor{\log_2 k}}{4} \frac{\varepsilon k}{8} \frac{\sqrt{d}}{8} = \Omega(k\sqrt{d}\varepsilon \log k).
$$
Note that the optimal cost for this instance is at most $k\varepsilon\sqrt{d}$ and $\varepsilon = 1/\ceil{\log k}$. Combining the two cases above, we have the cost given by threshold tree $T$ is at least
$$
\mathrm{cost}_{\ell_2}(X,T) = \Omega(k\sqrt{d}\varepsilon \log k) = \Omega(\log k) \mathrm{OPT}_{\ell_2}(X).
$$
\end{proof}

\section{Fast Algorithm}\label{sec:fast_alg}
In this section, we provide a fast variant of Algorithm~\ref{alg:threshold_tree_kmedians} with running time $O(kd\log^2 k)$. The input of this algorithm is the set of reference centers $c^1,\dots,c^k$ and the output is a threshold tree that splits all centers. The algorithm does not consider the data points (hence, it does not explicitly assign them to clusters). It takes an extra $O(nk)$ time to assign every point in the data set to one of the leaves of the threshold tree.

This fast variant of Algorithm~\ref{alg:threshold_tree_kmedians} picks a separate threshold cut $\omega^u$ for each leaf $u$. This cut is chosen uniformly at random from $R^u$, where 
$$R^u = \bigcup_{c^i,c^j \in X_u} S_{ij}.$$ 
That is, $R^u$ is the set of all cuts $\omega$ that separate at least two centers in $X_u$. The algorithm then splits leaf $u$ into two parts using $\omega^u$.

A straightforward implementation of the algorithm partitions each leaf by computing $\delta_c(\omega^u)$ for all centers $c$ in $X_u$. It takes $O(d\cdot |X_u \cap C|)$ time to find $R^u$ and sample $\omega^u$ for each $u$.
It takes time $O(|X_u \cap C|)$  to split 
$X_u$ into two groups. Thus, the total running time of this implementation of the algorithm is $O(k^2 d)$. We now discuss how to implement this algorithm with running time $O(kd\log^2 k)$ using red-black trees. 

The improved algorithm stores centers for each leaf of the threshold tree in $d$ red-black trees. Centers in the $i$-th red-black tree are sorted by the $i$-th coordinate. Using red-black trees, we can find the minimum and maximum values of $c_i$ for $c\in C\cap X_u$ in time $O(d\log k)$. Denote these values by $a_i$ and $b_i$, then
$$R^u = \bigcup_i\; \{i\}\times [a_i, b_i].$$
Hence, we can find $R^u$ and sample a random cut $\omega^u$ in time $O(d\log k)$ for each $u$.

To partition set $X^u$ into two groups with respect to $\omega^u=(i,\theta)$, we consider the $i$-th red-black tree for leaf $u$ and find the sizes of the new parts, $Left = \{c\in X_u\cap C: c_i \leq \theta\}$ and $Right = \{c\in X_u\cap C: c_i > \theta\}$. We choose the set that contains fewer centers. Let us assume that the second set ($Right$) is smaller the first one ($Left$). Then, we find all centers in $Right$ and delete them from this red-black tree and all other red-black trees for node $u$. We assign the updated red-black trees (with deleted $Right$) to the left child of $u$. For the right child, we build $d$ new red-black trees, which store centers for $Right$. Since we delete at most half of all centers in the red-black tree, each center is deleted at most $O(\log k)$ times. Each time it is deleted from $d$ trees and inserted into $d$ trees. Each deletion and insertion operation takes time $O(\log k)$. Thus, the total time of all deletion and insertion operations is $O(kd\log^2 k)$. 

We note that though this algorithm slightly differs from the algorithm presented in Section~\ref{sec:k-medians}, its approximation guarantees are the same.
\bibliographystyle{plainnat}
\bibliography{ref}
\end{document}